\newlength{\figsize} \setlength{\figsize}{0.22\textwidth}
\newlength{\widthfigsize} \setlength{\widthfigsize}{0.32\textwidth}
\begin{document}
\conferenceinfo{KDD}{2015 Sydney, Australia}

\title{Binary Coding in Stream}

\numberofauthors{2} 
\author{
\alignauthor
Mina Ghashami
       \\
       \affaddr{University of Utah}\\
       \affaddr{Utah, USA}\\
       \email{ghashami@cs.utah.edu}
% 2nd. author
\alignauthor
Amirali Abdullah
\\
       \affaddr{University of Utah}\\
       \affaddr{Utah, USA}\\
       \email{amirali@cs.utah.edu}
}

\date{20 Feb 2015}

\maketitle
\begin{abstract}
Big data is becoming ever more ubiquitous, ranging over massive video repositories, document corpuses, image sets and Internet routing history. Proximity search and clustering are two algorithmic primitives fundamental to data analysis, but suffer from the ``curse of dimensionality'' on these gigantic datasets. 
%in terms of high memory footprint and slow processing. 
A popular attack for this problem is to convert object representations into short binary codewords, while approximately preserving near neighbor structure. However, there has been limited research on constructing codewords in the ``streaming" or ``online" settings often applicable to this scale of data, where one may only make a single pass over data too massive to fit in local memory. 

In this paper, we apply recent advances in matrix sketching techniques to construct binary codewords in both streaming and online setting. 
%Technical challenges include that a similarity matrix on $n$ points scales with complexity $\Omega(n^2)$, and that we wish to incorporate similarity information from all points in constructing our codebook. 
Our experimental results compete outperform several of the most popularly used algorithms, and we prove theoretical guarantees on performance in the streaming setting under mild assumptions on the data and randomness of the training set.
\end{abstract}
% A category with the (minimum) three required fields
%\category{H.4}{Information Systems Applications}{Miscellaneous}
%A category including the fourth, optional field follows...
%\category{D.2.8}{Software Engineering}{Metrics}[complexity measures, performance measures]
%
%\terms{Streaming Eigen Decomposition}
%
%\keywords{Binary Code Words, Spectral hashing , Approximate Nearest Neighbor}

%\input{intro}
\section{Introduction}
Due to overwhelming increase in sheer volume of data being generated every day, fundamental algorithmic primitives of data analysis are being run on ever larger data sets. These primitives include approximating nearest neighbour search~\cite{indyk1998approximate, binarylsh}, clustering~\cite{ball1967clustering,kanungo2002efficient}, low dimensional embeddings~\cite{niyogi2004locality,belkin2001laplacian}, or learning distributions from a limited number of samples~\cite{learnability} etc.

A prominent approach for handling gigantic datasets is to convert object representations to short \textit{binary codewords} such that similar objects map to similar binary codes.
%
%Sometimes these codes correspond naturally to binary features of a dataset, however there are ways to define or construct them. For example,
%unsupervised methods derive codewords from similarity-preserving feature vectors in Euclidean space, or construct them from a data independent affinity matrix. %\mina{is this correct?}
%On the opposite side, supervised methods\cite{torralba2008small,semantic} take additional contextual information into account and use a similarity notion that is semantically meaningful for codewords, e.g. two documents are similar if they are about the same topic or two images are similar if they contain same objects and colors.
%As examples of diverse use of binary representations in data analysis, 
Binary representation is widely used in data analysis tasks, for example Song et.al~\cite{videosearch} gave an algorithm for converting a large video dataset into a set of binary hashes. Seo~\cite{seo} proposed a binary hashing scheme for music retrieval. Fergus, Weiss and Torralba~\cite{fergus2009semi} employed a spectral hashing scheme for labeling gigantic image datasets in semi-supervised setting. Julie and Triggs~\cite{recognition} used binary feature vectors for visual recognition of objects inside images. Guruswami and Sahai~\cite{ecc} give an embedding into Hamming space that reduces multi-class learning to an easier binary classification problem.

Codewords as succinct representation of data serve multiple purposes: 1) They can be used for dimensionality reduction, 2) They can emphasize user-desired distance thresholds, i.e. to encode data points such that near neighbors become much closer in Hamming space, rather than a simple proportionate embedding of distances, and 3) They allow the use of efficient tree based search data structures and enable the use of nearest neighbor techniques in Hamming space. (For more on how to conduct such searches quickly in Hamming space, see for instance the work by Norouzi, Punjani and Fleet~\cite{fasthamming} or by 
Esmaeili, Ward and Fatourechi~\cite{2012fast}.)

%A desirable property of such coding schemes is that they should map similar data points to similar binary codes, i.e., codes with a low Hamming distance. Semantic hashing by Salakhuntdinov \etal \cite{semantic} was the first technique to introduce such coding with respect to Hamming distance.
%Hamming distances can be computed very efficiently in hardware, resulting in very fast retrieval of strings similar to a given query, even for brute-force search in a database consisting of millions of data points\cite{}.

%These codes may sometimes be found trivially if dataset $S$ is already described by binary features, or if a precomputed data structure has partitioned $S$ in a locality preserving and hierarchical manner.  
%However where we are not so fortunate, we need to employ a similarity matrix and \emph{learn} these codewords. %In this case, we cast finding a set of codewords as an optimization problem subject to binary encoding's constraints.

Sometimes these codes may be found trivially, e.g. if a dataset is already described by binary features, or is partitioned in a locality preserving and hierarchical manner. 
However where we are not so fortunate, we need to \emph{learn} them by seeking help of a constructive similarity function.
For instance,
unsupervised methods derive codewords from feature vectors in Euclidean space, or construct them from a data independent affinity matrix. 
On the opposite side, supervised methods \cite{torralba2008small,semantic} take additional contextual information into account and use a similarity notion that is semantically meaningful for codewords, e.g. two documents are similar if they are about the same topic or two images are similar if they contain same objects and colors.

On a meta-level, any binary coding scheme should satisfy three following properties to be considered effective: 
\begin{enumerate}
\item{The codes should be short so that we can store large datasets in memory.}
\item{Codes should be similarity-preserving; i.e., similar data points should map to similar binary codes while far data points should not collapse to small neighborhoods.}
\item{The learning algorithm should efficiently compute codes for newly inserted points without having to recompute the entire codebook.}
\end{enumerate}
The need to simultaneously satisfy all three constraints above makes learning binary codes a challenging problem.% and is by no means a closed problem even in the offline setting. 

Broadly speaking, binary coding techniques fall into two categories: first class is the family of techniques, referred to as \textit{symmetric}, which binarize both datapoints of a dataset or database and query points, usually according to the same hashing scheme. This class includes locality sensitive hashing(LSH)~\cite{indyk1998approximate}, spectral hashing~\cite{weiss2009spectral}, 
locality sensitive binary codes\cite{raginsky2009locality}, Iterative Quantization(ITQ)~\cite{gong2011iterative} or semi-supervised hashing\cite{wang2012semi} techniques.
In contrast, the second class of methods, namely \emph{asymmetric} algorithms, binarize only data points and not query points, e.g. ~\cite{jegou2011product,dong2008asymmetric,gordo2014asymmetric,jegou2010aggregating}. These methods achieve higher accuracy due to greater precision in the query description, yet still have the storage and efficiency gains from binarizing the ground dataset.

\section{Background and Notation}
First, we briefly review some notation. We use lower case letters to denote functions, e.g. $w(x)$ and upper case letters to represent matrices, e.g. $W$. An $n \times d$ matrix $A$ can be written as a set of $n$ rows as $[A_{1,:}; A_{2,:}; \ldots, A_{n,:}]$ where each row $A_{i.:}$ is a datapoint of length $d$. Equivalently, this matrix can be written as a set of $d$ columns as $[A_{:,1}, A_{:,2}, \ldots, A_{:,d}]$. The element at row $i$ and column $j$ of matrix $A$ is denoted by $A_{ij}$.
The Frobenius norm of a matrix $A$ is defined $\|A\|_F = \sqrt{\sum_{i=1}^n \|A_{i,:}\|^2}$ where $\|A_{i,:}\|$ is Euclidean norm of $A_{i,:}$.  
Let $A_k$ refer to the best rank $k$ approximation of $A$, specifically $A_k = {\argmax}_{C : \rank(C) \leq k} \|A - C\|_F$.  
The singular value decomposition of $A \in \R^{n \times d}$, written $\svd(A)$, produces three matrices $[U,\Sigma,V]$ so that $A = U \Sigma V^T$.  Matrices $U \in \R^{n \times n}$ and $V \in \R^{d \times d}$ are orthogonal and their columns are the left singular vectors and right singular vectors, respectively. Matrix $\Sigma \in \R^{n \times d}$ is all $0$s except for the diagonal entries $\{\Sigma_{1,1}, \Sigma_{2,2}, \ldots, \Sigma_{r,r}\}$, the \emph{singular values}, where $r \leq d$ is the rank.  Note that $\Sigma_{j,j} \geq \Sigma_{j+1,j+1}$ for all $1 \leq j \leq r-1$, spectral norm of a matrix is $\|A\|_2 = \Sigma_{1,1}$, and $\Sigma_{j,j} = \|A V_{:,j}\|$ describes the norm along direction $V_{:,j}$.  
Numeric rank of a matrix $A$ is defined as $\|A\|_F^2/\|A\|_2^2$ and trace of a square matrix $M \in \R^{n \times n}$ is $\Tr(M) = \sum_{i=1}^n M_{i,i}$
For square matrix $A \in \R^{n \times n}$, eigen decomposition of $A$ is $\eig(A) = U \Lambda U^T$ where $U \in \R^{n \times n}$ contains eigen vectors as columns, and $\Lambda \in \R^{n \times n}$ is a diagonal matrix containing eigen values $\{\Lambda_{1,1}, \Lambda_{2,2}, \ldots, \Lambda_{n,n}\}$ in non-increasing order. 
Finally, expected value of a matrix is defined as the matrix of expected values, i.e. 
\[
\E[A] = \left( \begin{array}{ccc} \E[A_{1,1}] & \cdots & \E[A_{1,d}] \\ \vdots & \vdots & \vdots \\ \E[A_{n,1}] & \cdots & \E[A_{n,d}] \end{array} \right)
\]
%%%%%%%%%%%%%%%%%%%%%%%%%%%%%%%%%%%%%%%%%%%%%%%%%%%%%%%%%%%%%%%
\subsection{Related Works}
One of the basic and most popular binary encoding schemes is ``Locality Sensitive Hashing'' (LSH)\cite{datar2004locality} which uses random projections to embed data into lower dimensional space. This is done by employing a class of functions called locality-sensitive hash functions under which similar points collide with high probability.
A family of hash functions $H$ is called $(r,cr,P_1,P_2)$-sensitive if for any two points $p, q \in \R^d$ and any hash function $h \in H$, the following two properties hold:
\begin{enumerate}
\item{If $\|p - q\| \leq r$ then $P_{H}[h(p) = h(q)] \geq P_1$ and 2) if $\|p - q\| \geq cr$ then $P_{H}[h(p) = h(q)] \leq P_2$, where $P_H$ denotes the probability of an event under family of hash functions $H$, and $h(p)$ is the hashed value of point $p$ under hash function $h$. Note in this definition $r >0 $ is a threshold on distance and $c$ is an approximation ratio, and in order for LSH family to be useful it should be that $P_1 > P_2$.}
\item{LSH is a data independent method and can be done in streaming setting as it does not need to store data points and hashing or projection can be done on the fly. It is folklore that for random datasets LSH is near optimal, but in practice is generally outperformed by methods that use spectrum of data. To simplify somewhat, $k$ bit binary codewords of LSH can be assigned to a point in $\R^d$ by taking dot product with a collection of $k$ random vectors, and assigning each bit as $0$ or $1$ according to the sign of the value obtained~\cite{binarylsh}.}
\end{enumerate} 

One of the most famous binary encoding schemes is ``Spectral Hashing''(SH) \cite{weiss2009spectral}. If $W \in \R^{n \times n}$ is similarity matrix and $Y \in \R^{n \times k}$ is the binary coding matrix for $k$ being the length of codewords, then this method 
formulates the problem as minimizing $\sum_{i,j} W_{i,j} \| Y_{i,:} - Y_{j,:} \|^2$ with subject to $Y(i,j) \in \{-1, 1 \}$, balance constraint, i.e. $\sum_{j=1}^k Y_{i,j} = 0$ for each binary codeword $Y_{i,:}$, and evenly distributed constraint that enforce each bit be evenly distributed on $+1$ and $-1$ over the dataset.
It's not too hard to show that this optimization is equivalent to minimizing $\Tr(Y^T(D-W)Y)$ where $Y \in \R^{n\times k}$ is the matrix containing codewords, $D$ is the degree matrix with $D_{ii} = \sum_{j=1}^n W_{ij}$. 
However, due to the binary constraint $Y(i,j) \in \{-1, 1 \}$, this probelm is NP hard ,so instead authors threshold a spectral
relaxation whose solution is the bottom $k$ eigenvectors of graph Laplacian matrix $L = D-W \in \R^{n \times n}$. 
This however provides a solution to only training datapoints. In order to extend it to out-of-samples, they assume datapoints are sampled from a separable probability distribution $p(x)$; using the fact that graph Laplacian eigenvectors converge to the Laplace-Beltrami eigenfunctions of manifolds, they set thresholded eigen functions as codewords. However they only examine the simple case of a \textit{multidimensional uniform distribution} or box shaped data, as these eigen functions are well-studied. 

%------------------------------------------------
In \cite{fergus2009semi}, Fergus \etal extended their previous work\cite{weiss2009spectral} to \textit{any} separable distribution, i.e. any distribution $p(x)$ with a product form.
They consider semi-supervised learning in a graph setting, where a labeled dataset of input-output pairs $(X_{m},Y_m) = \{(x_1,y_1),$
$\ldots, (x_m,y_m)\}$ is given, and they need to label a larger set $X_u = \{x_{m+1},\ldots,x_n\}$ of unlabelled points.
Authors form the graph of all datapoints $X_m \cup X_u$, where vertices represent datapoints and edges are weighted with a Gaussian function $W_{i,j} = \exp(- \|x_i - x_j\|/\sigma)$. 
The goal is to find functions $f$ which agree with labeled data but are also smooth with respect to the graph, therefore they formulate the problem as minimizing the error function $J(f) = f^TLf +  \lambda \sum_{i=1}^{\ell}(f(i)-y_i)^2$, where $f(i)$ is the embedding of $i$-th point and $\Lambda$ is a diagonal matrix whose diagonal elements are
$\Lambda_{i,i} = \lambda$ if $x_i$ is a labeled point and $\Lambda_{i,i} = 0$ otherwise.
Note that $f^TLf$ is the smoothness operator defined on the entire graph Laplacian as $f^TLf = 1/2 \sum_{i,j}W_{i,j}(f(i)-f(j))^2$, and $\lambda \sum_{i=1}^{\ell}(f(i)-y_i)^2$ represents the loss on the labeled data.
Similar to their previous work\cite{weiss2009spectral} authors approximate the eigen vectors of $L$ by eigen functions
of laplace-beltrami operator defined on probability distribution $p$.

%---------------------------------
Finally, in the most recent work of this series, ``Multidimensional Spectral Hashing''(MDSH) \cite{weiss2012multidimensional}, Weiss \etal introduced a new formulation for learning binary codes; unlike other methods that minimize Hamming distance $\|Y_{i,:} - Y_{j,:}\|$, MDSH  approximates original affinity $W_{i,j}$ with weighted Hamming affinity $Y_{i,:}^T \Lambda Y_{j,:}$, where $\Lambda = \diag(\Lambda_1,\cdots,\Lambda_k)$ gives a weight to each bit.
%Weiss \etal \cite{} showed that definition of ground-truth neighbours can drastically affect relative performance of algorithms.
%Typically the cost function for learning binary codes is given in terms of the Hamming distance $\|y_i - y_j\|$ between binary codes of dat%apoints $i$ and $j$, but \cite{} reformulated the problem to seek binary codes that approximate dot products, or Hamming affinity $y_i^Ty_j$. They allow a weight $\lambda$ for each bit and seek codes such that this weighted Hamming affinity approximate the original affinity:
The authors show the best binary codes are obtainable via performing binary matrix factorization of affinity matrix, with the optimal weights given by the singular values. 
%Similar to Spectral hashing, MDSH gives a solution for out of sample extension assuming that input comes from a separable
%distribution, but unlike SH which discarded outer product eigen functions to achieve uncorrelated bits, this
%algorithm uses the outer product functions to improve the approximation.
%Since the number of outer product eigen functions scales exponentially with dimension, MDSH applies a \textit{kernel trick} to approximate Hamming affinity with computation that is linear in dimension. 

%--------------------
% I cut it from here
%---------------------------------------
SSH and MDSH can be adapted to the streaming setting, but have the unsatisfactory elements that neither addresses approximating the initial matrix optimization directly. Moreover the hashing functions (eigenfunctions) they learn are wholly determined by the initial training set and do not adapt as more points are streamed in.

In another line of works, authors formulate the problem as an iterative optimization. 
In \cite{gong2011iterative}, Gong and Lazebnik suggest ``Iterative Quantization''(ITQ) algorithm which is an iterative approach based on alternate minimization scheme that first projects datapoints onto top $k$ right singular vectors of data matrix, and then takes the sign of projected vectors to produce binary codes. Authors show that if we consider projected datapoints as vectors in a $k$-dimensional binary hypercube $C \in \{-1,1\}^{k}$, then sign of vector entries in each dimension is determined by the closest vertex of hypercube along that dimension. As rotating this hypercube does not change the codes, they alternatively minimize the quantization loss $Q(B,R) = \|B-VR\|_F^2$ by fixing one of two variables $B$ the binary codes, or $R$ the rotation matrix, and solving for the other. They show that in practice repeating this for at most $50$ iterations beats some well-known methods including \cite{fergus2009semi,raginsky2009locality} and spectral hashing\cite{weiss2009spectral}.

Heo \etal ~cite{spherical} present an iterative scheme that partitions points using hyperspheres. Specifically, the algorithm places $k$ balls, such that the $i$th bit a point $S_{;i}$ is 1 if it is contained in the $k$-th ball and $0$ otherwise. At each step of the process if the intersection of any two balls contains too many points, a repulsive force is applied between them, whereas if the intersection contains too few an attractive force is applied. This minimization continues until a reasonably balanced number of the points are contained in each hypersphere.  Both these iterative algorithms seem difficult to adapt to a streaming setting, in the sense that the hash functions are expensive to learn on a training set and not easily updated.
%
%--------------------

In the supervised setting, Quadrianto \etal~\cite{quadrianto} present a probabilistic model for learning and extending binary hash codes. They assume the input dataset follows certain simple and well studied probability distributions, and that supervision is provided in terms of labels indicating which points are neighbors and which are far. Under these constraints, they may train a latent feature model to extend binary hash codes in the streaming setting as new data points are provided.

%--------------------
In the broader context, the most comparable line of works with our problem is matrix sketching in the stream.
Although there has been a flurry of results in this direction\cite{numerical,ghashami1,Lib12}, we mention those which are most related to our current work.
In ~\cite{drineas2005nystrom}, Drineas and Mahoney approximate a gram matrix $G \in \R^{n \times n}$ by sampling $s$ columns (datapoints) of an input matrix $A \in \R^{d \times n}$ proportional to the squared norm of columns. 
They approximate $G$ with $\tilde{G} = C W_k^{+} C^T$, where $C \in \R^{n \times s}$ is the gram matrix between $n$ datapoints and $s$ sampled points, $W_k \in \R^{s \times s}$ is the best rank $k$ to $W$ where $W$ is the gram matrix between sampled points. 
They need to sample $O(k/\eps^4)$ columns to achieve Frobenius error bound $\|G - \tilde{G}_k\|_{F} \leq \|G - G_k\|_{F} + \eps \sum_{i=1}^n G_{ii}^2$, and need to sample $O(k/\eps^2)$ columns to get spectral error bound $\|G - \tilde{G}_k\|_{2} \leq \|G - G_k\|_{2} + \eps \sum_{i=1}^n G_{ii}^2$. Their algorithm needs $O(d/\eps^2 + 1/\eps^4)$ space and has running time of $O(nd + n/\eps^2 + 1/\eps^4)$ on training set. The update time for any future datapoint (or test point) is $O(d/\eps^2 + 1/\eps^4)$.

The state-of-the-art matrix sketching technique is \FD algorithm first introduced by Liberty~\cite{Lib12} and then reanalyzed by Ghashami and Phillips~\cite{ghashami1}. \FD  maintains a deterministic, small space sketch for an input matrix and can be easily incrementally updated in the stream. In fact, for any input matrix $A \in \R^{n \times d}$, \FD maintains a sketch $B \in \R^{\ell \times d}$ with $\ell = 2/\eps$ rows, achieves error bound $\|A^TA - B^TB\|_2 \leq \eps \|A\|_F^2$ and runs in time $O(nd/\eps)$. It is shown by Woodruff that the approximation quality is optimal~\cite{lowerbound}. 

\subsection{Our Result}
 In this paper, we focus on finding codewords for a dataset $S \subset \mathbb{R}^d$ given in a stream. We consider an unsupervised setting where mutual similarity between datapoints is induced by Gaussian kernel function $w(p,q) = \exp(-\|p-q\|^2/\sigma)$ rather than any contextual information. 
We develop a reasonable model of data holding two assumptions:
\begin{enumerate}
\item \textit{sparsity},  that enforces data similarity not being dominated by ``near-duplicates''.
\item \textit{bounded doubling dimension}, that is data has a low-dimensional structure. This assumption is widely used  as ``effective low-dimension'' in Euclidean near neighbor search problems \cite{KRu,KLee,BKL06,CG06,IN07,HPK13-lowd,abdullah}, and corresponds well with existence of a good binary codebook \footnote{A good binary codebook is roughly equivalent to a low distortion embedding into a low-dimensional Hamming space.}.
\end{enumerate}
%We add an additional limitation that our algorithm is only permitted one pass over data and has limited storage space. 
Under this model, we propose the ``Streaming Spectral Binary Coding'' (SSBC) algorithm that builds off of \fd and shows that if training set is a ``good representor'' of the stream, i.e. that the stream is in random order, then one can accurately update important directions (eigen vectors) of the weight matrix in a stream. These vectors are then used to construct the desired codewords.

In fact, as we show in section \ref{sec:exp} our technique works in both streaming and online settings, achieves $O(n/\eps \polylog n)$ space in former setting and $O( k/\eps \polylog n)$ space in latter setting. Note both bounds are much smaller than $\Theta(n^2)$ which is the required space for storing similarity matrix. 
% We prove that adapting this sketch to training data is roughly equivalent to sampling columns of the weight matrix, allowing us to approximate the top singular vectors. These vectors may then be used to construct the desired codewords.
%
%We propose an streaming algorithm that requires $O(n/\eps \polylog n)$ space and an online algorithm of $O( k/\eps \polylog n)$ space usage. Both space bounds are much smaller than size of similarity matrix, i.e.  $\Theta(n^2)$. 
%
%Our proposed algorithm for binary coding is derived from \FD. \FD is a streaming algorithm for matrix sketching, it maintains a sketch matrix $B \in \R^{\ell \times d}$ and approximates spectrum of any matrix $A \in \R^{n \times d}$ that comes in streaming manner. For $\ell = O(1/\eps)$, \FD achieves error bound $\|A^TA - B^TB\|_F \leq \eps \|A\|_F^2$ and runs in $O(n d \ell)$ time.

Our starting matrix optimization formulation is closely adapted from those posed in this line of work by Fergus, Weiss and Torralba. 
However, our approach to \emph{solving} the problem and out-of-sample extension differs fundamentally from previous tactics of using functional approximation methods and learned eigenfunctions. We maintain a sketch of the weight matrix instead, and adjust it during the course of the stream. 
While known functional analysis techniques rely on assumptions on the data distribution (in particular that it is drawn from a separable distribution) we argue that solving for the matrix approximation directly addresses the original optimization problem without such restrictions, thereby achieving the superior accuracy our experiments demonstrate.

\section{Setup and Algorithm}\label{sec:algorithm}
In this section, we first set up matrix optimization problem that is the starting point of the work by Weiss, Fergus and Torralba~\cite{weiss2012multidimensional}, then we describe our algorithm ``Streaming Spectral Binary Coding'' (SSBC) for approximating binary codewords in a stream or online setting.
\subsection{Model and Setup}\label{sec:model}  
We denote input dataset as $S \in \R^{n \times d}$ containing $n$ datapoints in $\R^d$ space and represent binary codes as $Y \in \R^{n \times k}$, where $k \in \Z^+$ is a parameter specifying length of codewords.

We define affinity or similarity between datapoints $S_{i,:}$ and $S_{j,:}$ as $w(S_{i,:},S_{j,:}) = \exp(-\|S_{i,:} - S_{j,:}\|^2/\sigma^2)$ where $\sigma$ is a parameter set by user,  corresponding to a threshold between ``near" and ``far" distances.
Since codewords are vectors with $\pm1$ entries, one can write $\|Y_{i,:} - Y_{j,:}\|^2 = 2k - 2Y_{i,:}^TY_{j,:}$, and match \textit{Hamming affinity} $Y_{i,:}^TY_{j,:}$ with $w(S_{i,:},S_{j,:})$  instead of minimizing Hamming distance. 
Similar to~\cite{weiss2012multidimensional}, we define a diagonal weight matrix $\Lambda = [\Lambda_{1,1},\cdots,\Lambda_{k,k}]$ to give an importance weight $\Lambda_{j,j}$ to $j$-th bit of codewords.
Therefore we formulate the problem as:
\begin{align*}
(Y^*, \Lambda^*) &= \argmin_{Y_{i,:} \in \{\pm1\}^k,\Lambda} \sum_{i,j} \left(w(i,j) - Y_{i,:}^T \Lambda Y_{j,:}\right)^2 \\
&= \argmin_{Y_{i,:} \in \{\pm1\}^k,\Lambda} \|W - Y\Lambda Y^T\|_F^2
\end{align*}

This optimization problem is solvable by a binary matrix factorization of the affinity matrix, $W$. As discussed in \cite{srebro2003weighted,weiss2012multidimensional}, the $\pm1$ binary constraint makes this problem computationally intractable, but a relaxation to real numbers results in a standard matrix factorization problem that is easily solvable. 
If $W = U \Lambda U^T$ is eigen decomposition of $W$, then $i$-th row of $U_{k} \in \R^{n \times k}$ provides a codeword of length $k$ for $i$-th datapoint, which can be easily translated into a binary codeword by taking sign of entries.
The result binary codeword will be an approximation to the solution of binary matrix factorization. 

We consider solving binary encoding problem in two settings ``streaming'' and ``online'', where in both model one datapoint arrives at a time, is processed quickly and not read again. In the streaming setting, we output all binary codewords at the end of stream, while in the online setting, we are obliged to output binary codeword of current datapoint before seeing next datapoint.
Space usage is highly constrained in both models, so we cannot store the entire weight matrix $W$ (of size $\Omega(n^2)$) nor even the dataset itself (of size $O(nd)$).   

Below, we specify assumptions we make in our data model for the purposes of theoretical analysis. However, we note that our experiments show strong results without enforcing any restrictions on the datasets we consider.

\begin{enumerate}
\item{Our first assumption is ``sparsity", namely that no two points $p$ and $q$ are asymptotically close to each other. Specifically, that $\|p - q \| \geq (0.1\; \sigma/\log n)$, for all $1 \leq i,j \leq n$, where $\sigma$ is the threshold distance parameter of our Gaussian kernel. When our data is being analyzed for clustering/near neighbor purposes, this condition implies that identical points have either been removed or combined into a single representative point.}
\item{Our second assumption is that the data has bounded doubling dimension $d_0$. Namely that a ball $B$ of radius $r$ contains at most $O \left( \left( r/\eps \right)^{d_0}\right)$ points spaced at distance at least $\eps$. This is a standard model in the algorithms community for modeling data drawn from a low dimensional manifold. It is also intuitively compatible with the existence of a good representation of our data by $k$-bit codewords for bounded $k$, as binary encoding is simply an embedding into $k$-dimensional Hamming space.}
\end{enumerate}

%The algorithm takes three input values $S_{train}$, $S_{test}$ and $k$ where $S_{train}$ and $S_{test}$ are training set and test set, which together form the whole data stream $S = [S_{train}; S_{test}]$. Value $k > 0$ is the length of the codewords we seek.
%
%Note that although $S \subset \R^d$ can be unbounded, we use $n$ to denote number of data points we have received so far. This allows us to analyze results in terms of $n$ at any given point in time.
%
%We assume $S_{train}$ follows same underlying distribution as $S_{test}$, and is sampled \textit{uniformly} at random from $S$.

\subsection{Streaming Binary Coding Algorithm}
Our method, which we refer to as ``SSBC'' is described in algorithm \ref{alg:ssh}.
SSBC takes three input values $S_{train}$, $S_{test}$ and $k$ where $S_{train}$ is a small training set sampled uniformly at random from the underlying distribution of data, e.g. $\mu$. We denote size of $S_{train}$ by $|S_{train}| = m$, and we assume $m > \polylog(n)$. For ease of analysis, wherever we come across some $\polylog(n)$ to a constant exponent, we assume the term to be smaller than $m$.
On the other hand, $S_{test}$ is a potentially unbounded set of data points coming from same distribution $\mu$. Even though $S_{test}$ can be unbounded, for the sake of analysis, we denote total number of datapoints in union of both sets as $n = |S_{train}| + |S_{test}|$.
Value $k > 0$ is the length of the codewords we seek.

The algorithm maintains a small sketch $B$ with only $\ell \ll m \ll n$ rows. For each datapoint $p \in S_{train}$, SSBC computes its (Gaussian) affinity with all points in $S_{train}$, outputs an $m$ dimensional vector $\hat w_p$ as the result, and inserts it into $B$. Once $B$ is full, SSBC takes \svd of $B$ ($[U,\Sigma,V] = B$), subtracts off smallest singular value squared, i.e. $\Sigma_{\ell, \ell}^2$, from squared of all singular values, and reconstruct $B$ as $B = \Sigma' V^T$. This results in zeroing out last row of $B$, and making space for processing next upcoming train datapoint. Note after processing $S_{train}$, matrix $V \in \R^{\ell \times \ell}$ contains an $\ell$-dimensional approximation to similarity structure of train set. 
As we observe, SSBC employs \FD algorithm \cite{Lib12} to process affinity vectors in streaming manner; instead of referring to \FD, we included its pseudocode completely in algorithm \ref{alg:ssh}. 
As many similarity measures can be used to capture the affinity between datapoints, SSBC uses \textit{Gaussian affinity} $W=\exp \left(-\|q-p\|^2/\sigma \right)$, where $\sigma$ is a parameter denoting the average near neighbor distance we care about. 
%In order to measure mutual similarity between datapoints, 
%here we use the Gaussian affinity function $W = \exp \left(\frac{-\|q-p\|^2}{\sigma} \right)$ to measure it between any two datapoints $p$ and $q$. The parameter 
This function is called in subroutine \ref{alg:ha} to measure the affinity between any test point and all train datapoints.

At any point in time, we can get binary codeword of any datapoint $q \in S$ by first computing its affinity with $S_{train}$, getting vector $\hat{w}_q$ as output and multiplying it by right singular vectors. More specifically if $y_q$ denotes binary codeword of $q$, then $y_q = sign(\hat{w}_q \times V) \in \R^{\ell}$ gives a $\ell$-length codeword. To get a codeword of length $k$, we truncate $V$ to its first $k$ columns, $V_k \in R^{\ell \times k}$. 

%which is an algorithm for approximating spectrum of a matrix. For any input matrix $A \in \R^{n \times d}$, \FD maintains a sketch $B$ with $\ell = 2/\eps$ rows, achieves error bound $\|A^TA - B^TB\|_2 \leq \eps \|A\|_F^2$ and runs in time $O(nd/\eps)$. %For integrity, we 

\begin{algorithm}
\caption{\label{alg:ssh} Streaming Spectral Binary Coding (SSBC)}
\begin{algorithmic}
\STATE \textbf{Input:} $S_{train}, S_{test} \subset \R^{d}$, $k \in \Z^+$ as length of codeword
\STATE Define $S = [S_{train};S_{test}]$, $m = |S_{train}|$, and $n = |S|$
\STATE Set $\ell = \lceil k + k/\eps \rceil$ as sketch size
\STATE Set $B \in \R^{\ell \times m}$ to full zero matrix
%\STATE Set $Z \in \R^{n \times m}$ to zero matrix, set $Z_{1:\ell-1} = B$ 
\FOR {$i \in [1:n]$}
  \STATE $\hat w = \HA(S_{i,:}\; , S_{train}$)
  \STATE Insert $\hat w$ into a full zero row of $B$
  \IF {$B$ has no full zero rows}
    \STATE $[U,\Sigma,V] = \svd(B)$
    \STATE $\Sigma' = \sqrt{\Sigma^2 - \Sigma_{\ell,\ell}^2}$
    \STATE $B = \Sigma'V^T$
  \ENDIF
\ENDFOR
\STATE \textbf{Return} $B$
\end{algorithmic}
\end{algorithm}

A notable point about SSBC is that it can construct binary codewords on the fly in an online manner, i.e. using current iteration's matrix $V$ to generate the binary codeword for current datapoint. 
%before seeing the next point a binary codeword is constructed for the current test point. 
As we show in section \ref{sec:err} this leads to the small space usage of $O(\ell m) = O(1/\eps^2 \polylog(n))$. Clearly, SSBC can generate all codewords at the end of stream too (streaming setting); in that case it needs to store all $\hat w$ vectors and uses final matrix $V$ to construct codewords. Space usage in streaming setting is $O(n m + \ell m) = O\left((1/\eps^2 + n)\polylog(n)\right)$.
The update time (or test time) in both models is $O(md + d\ell) = O(d/\eps \polylog(n))$. 

\begin{algorithm}
\caption{\label{alg:ha} Gaussian Affinity}
\begin{algorithmic}
\STATE \textbf{Input:} $q \in \R^{d}$ as a test point, $S_{train} \subset \R^d$
\STATE Define $\sigma$ to similarity threshold between points in $S_{train}$
\STATE Set $\hat W \in \R^m$ to zero vector, where $m = |S_{train}|$
\STATE Set $i = 0$
\FOR {$p $ in $S_{train}$}
  \STATE $\hat W[i] = \exp(-\|q - p\|^2/\sigma)$
  \STATE $i\; \Scale[0.8]{++}$
\ENDFOR
\STATE \textbf{Return} $\hat{W}$
\end{algorithmic}
\end{algorithm}

To explain good performance of SSBC, we argue that under the data model described in Section~\ref{sec:model}, squared norms of the columns of $W$ are within a $\polylog(n)$ factor of each other. Using this fact, we show a uniform sample of the columns of $W$ is a good approximation to $W$. In what follows, let $C_i$, $C_{max}$ and $C_{min}$ denote squared norm of $i$-th column of $W$, maximum and minimum squared norm of any column of $W$ respectively.

\begin{lemma}
Under ``sparsity'' and ``bounded doubling dimension'' assumptions:
\[
C_{max}/C_{min} \leq (\log(n))^{O(d_0)}
\]
\end{lemma}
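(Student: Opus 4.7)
Write $C_i = \|W_{:,i}\|^2 = \sum_{j=1}^n \exp(-2\|S_{i,:}-S_{j,:}\|^2/\sigma^2)$. For the lower bound I observe the trivial but sufficient fact that the diagonal entry $W_{ii}=1$ contributes $1$ to $C_i$, hence $C_{\min} \geq 1$. So the whole task reduces to showing $C_{\max} \leq (\log n)^{O(d_0)}$.

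For the upper bound the plan is a standard annular decomposition around an arbitrary fixed point $S_{i,:}$. Define the shells
\[
\mathrm{Sh}_0 = \{\, j : \|S_{i,:}-S_{j,:}\| \leq \sigma\,\}, \qquad \mathrm{Sh}_k = \{\, j : \sigma\cdot 2^{k-1} \leq \|S_{i,:}-S_{j,:}\| < \sigma\cdot 2^{k}\,\}\ \text{for } k\geq 1.
\]
I would then use the two hypotheses in tandem: the sparsity assumption says every pair is at distance at least $\epsilon := 0.1\sigma/\log n$, and the doubling-dimension assumption says any ball of radius $r$ contains at most $O((r/\epsilon)^{d_0})$ such $\epsilon$-spaced points. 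Applied to the ball of radius $\sigma\cdot 2^k$, this gives $|\mathrm{Sh}_k| \leq |B(S_{i,:},\sigma\cdot 2^k)| = O\bigl((2^k \log n)^{d_0}\bigr)$.

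Next I would bound the contribution of each shell to $C_i$. For $k\geq 1$ every point in $\mathrm{Sh}_k$ contributes at most $\exp(-2\cdot 2^{2(k-1)}) = \exp(-2^{2k-1})$, so
\[
\sum_{k\geq 1}\sum_{j\in\mathrm{Sh}_k} \exp(-2\|S_{i,:}-S_{j,:}\|^2/\sigma^2) \;\leq\; \sum_{k\geq 1} O\!\left((2^k\log n)^{d_0}\right)\cdot \exp(-2^{2k-1}),
\]
and the doubly-exponential decay in $k$ crushes the polynomial-in-$2^k$ growth, so the whole tail is $O((\log n)^{d_0})$ (actually even smaller). The innermost shell $\mathrm{Sh}_0$ contributes at most $|\mathrm{Sh}_0|\cdot 1 = O((\log n)^{d_0})$ since each term is at most $1$. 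Adding the two bounds yields $C_i \leq (\log n)^{O(d_0)}$ for every $i$, and combining with $C_{\min}\geq 1$ finishes the lemma.

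The only subtle step is verifying that $|B(S_{i,:},r)| \leq O((r/\epsilon)^{d_0})$ is a valid application of the doubling hypothesis; this is where the sparsity assumption is essential, because without a minimum inter-point distance $\epsilon$ one cannot turn a doubling bound into an absolute count of points in a ball. Everything else is a bookkeeping exercise in summing a series whose terms decrease super-exponentially, so I do not anticipate a genuine obstacle there.
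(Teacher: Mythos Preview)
Your argument is correct; the lower bound is identical to the paper's, and your upper bound via dyadic shells is valid. The only difference is that the paper uses a coarser two-piece decomposition instead of your infinite annular one: it splits the points around $S_{i,:}$ into a ``far'' set $P_f=\{j:W_{i,j}\le 1/n\}$ and a ``close'' set $P_c=\{j:W_{i,j}\ge 1/n\}$. The far points contribute at most $|P_f|\cdot (1/n)\le 1$ in total (so there is no need to track the decay rate shell-by-shell), while the close points lie in a single ball of radius $O(\sigma\sqrt{\log n})$ and hence number at most $(\log n)^{O(d_0)}$ by exactly the doubling-plus-sparsity count you invoke. Your shell decomposition is the textbook move and would be needed if the kernel decayed only polynomially; here the Gaussian tail is so brutal that one threshold at $1/n$ already kills the far part, which is what the paper exploits to keep the bookkeeping shorter.
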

\begin{proof}
First note that it is trivially true that $C_{min} \geq 1$, since $C_{min}^2 \geq W_{i,i}^2 =  \exp^2(-\|S_{i,:} - S_{i,:} \|^2) = \exp^2(0) = 1$. We now upper bound $C_{max}$. Let $C_i$ denote squared norm of an arbitrary column of $W$, so that upper bounding $C_i$ would also bound $C_{max}$. Let $S_{i,:}$ be the corresponding datapoint associated with column $W_{:,i}$.
We proceed by partitioning points of $S$ close to (similar) and far (dissimilar) from $S_{i,:}$ as $P_c$ and $P_f$, respectively.
%\mina{What is $x_i$? what is $P$? what does $j$ refer to in definition of $P_f$?we should specify these.}
Define $P_f = \{ S_{j,:} \in S$, s.t. $W_{i,j} \leq \frac{1}{n} \}$ and $P_c = \{ S_{j,:} \in S$, s.t. $W_{i,j} \geq \frac{1}{n} \}$. Note that the contribution of $P_f$ to $C_i$ is at most $| P_f | \frac{1}{n} \leq 1$, and contribution of $P_c$ to $C_i$ is at most 
 $| P_c| \cdot 1 \leq |P_c|$. So we bound the size of $P_c$. First we upper bound distance of any point $S_{j,:} \in P_c$ to point $S_{i,:}$ as following:
\[
W_{i,j} = \exp \left(-\frac{\|S_{i,:} - S_{j,:} \|^2}{\sigma} \right) \geq \frac{1}{n}
\]
Therefore $\|S_{i,:} - S_{j,:} \|^2  \leq \sigma \ln n$.

Now considering the sparsity condition, we have that the number of points $S_{j,:}$ within $\sigma \ln n$ distance of $S_{i,:}$ is at most 
$\left(\frac{\sigma \ln n} {0.1 \sigma} \right)^{d_0} \leq (\log n)^{O(d_0)}$.
\end{proof}
We immediately get the following corollary as a consequence:
\begin{corollary}\label{cor:first}
It holds $\forall i$, $1 \leq i \leq n$ that $$\frac{1}{\polylog(n)} \frac{ \| W \|_F^2 }{n} \leq C_i \leq \polylog(n) \frac{ \| W \|_F^2 }{n}$$.
\end{corollary}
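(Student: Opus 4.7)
The plan is to derive the corollary as an almost-immediate consequence of the preceding lemma, using only the fact that $\|W\|_F^2 = \sum_{i=1}^n C_i$ lies between $n \cdot C_{\min}$ and $n \cdot C_{\max}$.

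First, I would observe that since $\|W\|_F^2 = \sum_{i=1}^n C_i$ and every $C_i$ satisfies $C_{\min} \leq C_i \leq C_{\max}$, averaging gives
\[
C_{\min} \;\leq\; \frac{\|W\|_F^2}{n} \;\leq\; C_{\max}.
\]
This sandwich lets me replace the (unknown) average $\|W\|_F^2/n$ by either $C_{\min}$ or $C_{\max}$ depending on which side of the desired inequality I am pushing.

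Next, for the upper bound I would write $C_i \leq C_{\max}$ and then invoke the lemma, $C_{\max} \leq \polylog(n)\cdot C_{\min}$, followed by $C_{\min} \leq \|W\|_F^2/n$, to conclude $C_i \leq \polylog(n)\cdot \|W\|_F^2/n$. For the lower bound I would symmetrically write $C_i \geq C_{\min} \geq C_{\max}/\polylog(n) \geq (\|W\|_F^2/n)/\polylog(n)$. Both chains use exactly the ratio bound $C_{\max}/C_{\min} \leq (\log n)^{O(d_0)}$ from the lemma and the elementary averaging observation.

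There is no real obstacle here — the corollary is purely a rewriting of the ratio bound in terms of the global quantity $\|W\|_F^2/n$. The only mild care needed is to keep the $\polylog(n)$ factor consistent: the upper and lower sides each absorb one factor of the same $(\log n)^{O(d_0)}$ from the lemma, so the stated $1/\polylog(n)$ and $\polylog(n)$ factors are justified with the same implicit constant in the exponent.
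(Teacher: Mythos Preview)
Your proposal is correct and essentially identical to the paper's own argument: the paper also uses $nC_{\min}\le\|W\|_F^2\le nC_{\max}$ together with the lemma's ratio bound $C_{\max}/C_{\min}\le\polylog(n)$ to obtain both inequalities, giving the lower bound ``similarly using $C_{\max}$.'' Nothing further is needed.
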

\begin{proof}
For the upper bound, we have $n C_{min} \leq \| W \|_F^2$, or $C_{min} \leq \frac{\| W \|^2}{n}$. But for  arbitrary $C_i$, we have $C_i \leq polylog(n) C_{min}$ and hence $C_i \leq polylog(n) \frac{ \| W \|_F^2 }{n}$. The lower bound on $C_i$ follows similarly using $C_{max}$.
\end{proof}

\section{Error Analysis}\label{sec:err}
In this section, we prove our main result. 
Let $W \in \R^{n \times n}$ be the exact affinity matrix of $n$ datapoints in $S$, where $W_{i,j} = \exp(-\|S_{i,:} - S_{j,:}\|^2/\sigma)$. 
Let $m = |S_{train}|$ be size of training set and $\hat W \in \R^{n \times m}$ be the rescaled affinity matrix between all points in $S$ and $S_{train}$. 
Under the assumption that $S_{train}$ is drawn at random, we can imagine $\hat W$ is a column sample drawn uniformly at random from $W$.
In the general case, column samples are only good matrix approximations to $W$ if each column is drawn proportional to its norm, which is not known in advance in streaming setting. However we show that under our data model assumptions of Section \ref{sec:model}, a uniform sample suffices.
%
%Since $S_{train}$ is a uniform sample of $S$, we can treat $\hat W$ as a uniform column sample of $W$ too.
Define $W_{:, j^*}$ to be the column of $W$ that gets sampled for $j$-th column of $\hat W$. (This corresponds to a choice of $S_{j^*,:}$ as the $j$-th point in $S_{train}$). Now define the scaling factor of $\hat{W}$ as $\hat{W}_{i,j} = \frac{\sqrt{n}}{m} W_{i, j^*}$. 
Define $\tilde W = \hat W B^T B \hat{W}^\dagger$ as approximated affinity that could be constructed from $\hat{W}$ and the output of SSBC, i.e. $B \in \R^{\ell \times m}$. 
\footnote{Our algorithm does not actually construct $\hat W$ and $\tilde W$. Rather we use them as existential objects for our theoretical analysis.}

 We show that for $m = \Omega(\frac{1}{\eps} \polylog(n) \log(1 / \delta))$ and $\ell = 2/\eps$, then $\|W^2 - \tilde W\|_2 \leq \eps \|W\|_F^2$ with probability at least $1-\delta$.
%let $\hat{W} \in \R^{n \times m}$ be the matrix constructed by sampling $m$ columns of $W$ with replacement and with probability proportional to squared norms of columns. 
In our proof we use the Bernstein inequality on sum of zero-mean random matrices, which is stated below.

\paragraph{Matrix Bernstein Inequality}
Let $E_1,\cdots,E_m \in \R^{n\times n}$ be independent random matrices such that for all $1 \leq i \leq m$, $\E[E_i] = 0$ and $\|E_i\|_2 \leq \Delta$ for a fixed constant $\Delta$. If we define variance parameter as \[
\sigma^2 ~:= \max \{\|\sum_{i=1}^m \E[E_i^T E_i]\|_2,\|\sum_{i=1}^m \E[E_i E_i^T]\|_2 \}
\]
Then for all $t \geq 0$:
\[
\Pr\left[\Big\|\sum_{i=1}^m E_i\Big\|_2 \geq t \right] \leq 2n \cdot \exp\left(\frac{-t^2}{3\sigma^2 + 2\Delta t}\right)
\]
%%%%%%%%%%%%%%%%%%%%%%%% Column Sampling lemma %%%%%%%%%%%%%%%%
%Let $\hat W \in \R^{n \times m}$ be the affinity matrix between points in $S$ and $S_{train}$. 
Lemma below bounds spectral error between $W$ and $\hat W$.
\begin{lemma}
\label{lem:cs_bernstein}
If $W \in \R^{n \times n}$ is the exact affinity matrix of points $S$ and $\hat{W} \in \R^{n \times m}$ is the affinity matrix between points in $S$ and $S_{train}$, then for $m = \Omega \left(  \frac{1}{\eps} \polylog(n) \log(1/\delta) \right) $

%$m = O((\polylog^{1/2}(n)/\eps) \ln^{1/2}(n/\delta))$
\[
\|W^2-\hat{W}\hat{W}^T\|_2 \leq \eps \|W\|_F^2
\]
holds with probability at least $1-\delta$.
\end{lemma}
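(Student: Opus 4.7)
\emph{Proof plan.} The starting observation is that $\hat W \hat W^T$ is a sum of $m$ i.i.d.\ rank-one random matrices, one per uniformly sampled column, so the matrix Bernstein inequality stated above is the natural tool. Letting the $j$-th column of $\hat W$ be the rescaled column $\sqrt{n/m}\,W_{:,j^*}$ with $j^*$ drawn uniformly from $\{1,\dots,n\}$, I would write
\[
\hat W \hat W^T - W^2 \;=\; \sum_{k=1}^{m} E_k, \qquad E_k \;=\; \tfrac{n}{m}\, W_{:,k^*} W_{:,k^*}^T \;-\; \tfrac{1}{m}\, W^2,
\]
which is a sum of independent zero-mean symmetric matrices because $\E[W_{:,k^*} W_{:,k^*}^T] = \tfrac{1}{n} W W^T = \tfrac{1}{n} W^2$.

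Next I would bound the uniform per-summand spectral norm $\Delta$ using Corollary~\ref{cor:first}: every column squared norm satisfies $\|W_{:,k^*}\|^2 \leq \polylog(n)\,\|W\|_F^2/n$, so with $X_k = \tfrac{n}{m} W_{:,k^*}W_{:,k^*}^T$,
\[
\|X_k\|_2 \;=\; \tfrac{n}{m}\,\|W_{:,k^*}\|^2 \;\leq\; \tfrac{\polylog(n)}{m}\,\|W\|_F^2,
\]
and $\|\E[X_k]\|_2 = \|W^2\|_2/m \leq \|W\|_F^2/m$, so $\Delta = O(\polylog(n)/m)\,\|W\|_F^2$. For the variance I would use $\E[E_k^2] \preceq \E[X_k^2]$ together with
\[
X_k^2 \;=\; \tfrac{n^2}{m^2}\,\|W_{:,k^*}\|^2\, W_{:,k^*}W_{:,k^*}^T \;\preceq\; \tfrac{n\,\polylog(n)}{m^2}\,\|W\|_F^2\, W_{:,k^*}W_{:,k^*}^T,
\]
take expectations and sum to get
\[
\sigma^2 \;\leq\; \Big\|\sum_{k=1}^m \E[X_k^2]\Big\|_2 \;\leq\; \tfrac{\polylog(n)}{m}\,\|W\|_F^2\,\|W^2\|_2 \;\leq\; \tfrac{\polylog(n)}{m}\,\|W\|_F^4.
\]

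Finally I would invoke matrix Bernstein with $t = \eps\,\|W\|_F^2$: since $\sigma^2$ dominates $\Delta t$ for reasonable $\eps$, the exponent simplifies to $-\Omega\!\left(\eps^2 m / \polylog(n)\right)$, and choosing $m$ as required in the lemma (absorbing a $\log n$ factor into the $\polylog(n)$ and the union-bound prefactor $2n$ into $\log(1/\delta)$) yields failure probability at most $\delta$. The main obstacle, and the only place the data-model assumptions really enter, is controlling $\Delta$ and $\sigma^2$ with a \emph{single} $\polylog(n)$ multiplier each; this is exactly what Corollary~\ref{cor:first} buys us, and it is what lets uniform column sampling replace the norm-proportional sampling that classical column-sampling analyses require. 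Any looser bookkeeping on the column norms propagates directly into an extra polylogarithmic factor in $m$.
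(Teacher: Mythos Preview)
Your proposal is correct and follows essentially the same route as the paper: write $\hat W\hat W^T - W^2$ as a sum of i.i.d.\ zero-mean symmetric matrices, control $\Delta$ and $\sigma^2$ via Corollary~\ref{cor:first}, and apply matrix Bernstein with $t=\eps\|W\|_F^2$. Your variance computation through the PSD ordering $\E[E_k^2]\preceq \E[X_k^2]$ and the rank-one identity $X_k^2=(n^2/m^2)\|W_{:,k^*}\|^2\,W_{:,k^*}W_{:,k^*}^T$ is a bit tidier than the paper's cruder step $\sigma^2\le m\,\|E_i\|_2^2$, but both land on $\sigma^2=O(\polylog(n)/m)\,\|W\|_F^4$ and the same Bernstein exponent.
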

\begin{proof}
Consider $m$ independent random variables $E_i = \frac{1}{m} W^2 - \hat{W}_{:,i}\hat{W}_{:,i}^T$. We can show $\E[E_i] = 0$ as follows
\begin{align*}
\E[E_i] &= \frac{1}{m} W^2 - \E[\hat{W}_{:,i}\hat{W}_{:,i}^T]\\
&= \frac{1}{m} W^2 - \sum_{j=1}^n \frac{m}{n} \left(\frac{\sqrt{n}}{m} \right)^2 W_{:,j^*} W_{:,j^*}^T\\
&= \frac{1}{m} W^2 - \frac{1}{m} \sum_{j=1}^n W_{:,j}W_{:,j}^T \\
&= \frac{1}{m} W^2 - \frac{1}{m} WW^T = 0
\end{align*}
Note that last equality is correct because $W$ is a symmetric matrix, and therefore $W^2 = WW^T$.
We can now bound $\E[\hat W \hat{W}^T] = \sum_{i=1}^m \E[\hat{W}_{:,i}\hat{W}_{:,i}^T] = \sum_{i=1}^m \frac{1}{m} W^2 = W^2$. Using this result we bound $\|E_i\|_2$ as follows
\begin{align*}
\|E_i\|_2 &= \left\|\frac{1}{m} W^2 - \hat{W}_{:,i}\hat{W}_{:,i}^T \right\|_2 \\
&=
\left\|\frac{1}{m} \E[\hat{W}\hat{W}^T] - \hat{W}_{:,i}\hat{W}_{:,i}^T\right\|_2 \\
&\leq 
\frac{1}{m} \left\|\E[\hat{W}\hat{W}^T]\right\|_2 + \|\hat{W}_{:,i}\hat{W}_{:,i}^T\|_2 \\
&\leq 
\frac{1}{m} \E\left[\|\hat{W}\|_2^2\right] + \|\hat{W}_{:,i}\|_F^2  \\
&\leq 
\frac{1}{m} \E\left[\|\hat{W}\|_F^2\right] + \frac{n}{m^2} \|W_{:,i^*}\|_F^2 \\
&\leq 
\frac{1}{m} \|W\|_F^2 + \frac{n}{m^2} \left(\polylog(n)\frac{\|W\|_F^2}{n}  \right) \\
&= \frac{1}{m}\|W\|_F^2 + \frac{\polylog(n)}{m^2}\|W\|_F^2 \\
&= O \left( \left(\frac{\polylog(n)}{m^2} + \frac{1}{m}  \right)\|W\|_F^2 \right)
\end{align*}
Where the fourth line is achieved using Jensen's inequality on expected values, which states $\|\E[X]\| \leq \E[\|X\|]$ for any random variable $X$ and the third last line by Corollary \ref{cor:first}. 
Therefore $\Delta = \|W\|_F^2/m + \|W\|_F^2 \polylog(n)/m$ for all $E_i$s.

In order to bound variance parameter $\sigma^2$, first note due to symmetry of matrices $E_i$, its definition reduces to 
\[
\sigma^2 = \left\|\E\left[\sum_{i=1}^m E_i^2\right]\right\|_2 \leq \E \left[ \left\| \sum_{i=1}^m E_i^2 \right\|_2\right] \leq m\; \E \left[ \left\|E_i^2 \right\|_2\right]
\]
Where the last step follows since all the $E_i$ are identical random variables. We already have an upper bound on the value $\|E_i \|_2$ may achieve, and hence the square of this upper bounds $\E [\|E_i^2 \|_2]$. We bound $m\; \E[ \|E_i^2\|_2]$ as follows:
\begin{align*}
m\; \E \left[ \| E_i^2 \|_2  \right] &\leq m\; \|E_i \|_2^2  \\ 
& \leq  m\; O \left(\ \left(\frac{\polylog(n)}{m^2} + \frac{1}{m} \right) \|W\|_F^2 \right)^2\\
& \leq  O \left(\frac{\polylog(n)}{m}\|W\|_F^4 \right)
\end{align*}

Setting $M = \sum_{i=1}^m E_i = W^2 - \hat{W}\hat{W}$ and using Bernstein inequality with $t = \eps \|W\|_F^2$ we obtain 
\begin{align*}
&\Pr \left[\|W^2-\hat W \hat{W}^T\|_2 \geq \eps \|W\|_F^2 \right] \\
&\leq 2n\exp \Scale[1.2]{\left(\frac{-(\eps \|W\|_F^2)^2}{3\|W\|_F^4  \polylog(n)/m^2 + 2\eps \|W\|_F^4 \left(\frac{1}{m} + \frac{\polylog(n)}{m^{2}} \right)} \right)}\\
& = 2n \exp \left( \frac{-\eps^2 m^2}{3 \polylog(n) + 2\eps (m + \polylog(n))} \right) \leq \delta
\end{align*}
Taking natural logarithm from both sides and inverse ratios, we get:
\[
\frac{3\polylog(n) + 2\eps \polylog(n)}{\eps^2 m^2} + \frac{2}{\eps m} \leq \ln^{-1}{(2n/\delta)}
\]
Considering that $\eps \leq 1$, we seek to bound:
\[
\frac{3 \polylog(n)}{\eps^2 m^2} + \frac{2}{\eps m} \leq \ln^{-1}{(2n/\delta)}
\]
Solving for $m$ we obtain that for $m = \Omega \left(\frac{1}{\eps}\polylog (n) \log(1/\delta) \right)$, the bound holds with probability at least $1 - \delta$.% that $\|W^2-\hat W \hat{W}^T\|_2 \leq \eps \|W\|_F^2$.
\end{proof}

Hence $W$ has a similar spectrum to $\hat{W}$. %We already know that the output $B$ of our main algorithm is an accurate sketch on this column sample $\hat{W}$, where the affinities between each new test point and $S_{train}$ is a row update to $\hat{W}$. 
In the lemma below, we argue that spectrum of $\hat{W}$ can be captured well by sketch $B$. To this end we define $\tilde W = \hat W B^TB \hat{W}^{\dagger}$ and show $\tilde{W}$ is again similar to $\hat{W}$ 
; intuitively $\tilde{W}$ approximates projection of $\hat{W}$ onto the right singular vectors of $B$.
%%%%%%%%%%%%%%%%%%%%%%%%%%%%%%%%%%%%%%%%%%%%%%%
\begin{lemma}
\label{lem:FD}
Let $\hat W$ be the affinity matrix between datapoints in $S$ and $S_{train}$. Then for $\tilde W = \hat W B^TB \hat{W}^{\dagger}$ and $\ell = O \left( \frac{1}{\eps} \right)$
\[
\|\hat{W}\hat{W}^T - \tilde{W}\|_2 \leq \eps \|W\|_F^2
\]
\end{lemma}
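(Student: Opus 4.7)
The plan is to reduce the error $\hat{W}\hat{W}^T - \tilde{W}$ to an expression that the Frequent Directions guarantee directly controls. First I would establish the identity $\hat{W}\hat{W}^T = \hat{W}(\hat{W}^T\hat{W})\hat{W}^\dagger$: writing $\hat{W} = U\Sigma V^T$ in SVD form, both sides reduce to $U\Sigma^2 U^T$, since $\hat{W}\hat{W}^\dagger = UU^T$ is the orthogonal projection onto the column space of $\hat{W}$ and the columns of $\hat{W}\hat{W}^T$ lie in that subspace. Subtracting $\tilde{W} = \hat{W}B^TB\hat{W}^\dagger$ then rewrites the quantity of interest as $\hat{W}\Delta\hat{W}^\dagger$, where $\Delta := \hat{W}^T\hat{W} - B^T B$ is precisely the error object that FD is designed to bound.

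Next I would invoke the FD guarantee stated in the background: the shrinking step ensures $\Delta \succeq 0$ and $\|\Delta\|_2 \leq \|\hat{W}\|_F^2/\ell$, so with $\ell = \Theta(1/\eps)$ we obtain $\|\Delta\|_2 \leq (\eps/2)\|\hat{W}\|_F^2$. The main obstacle is that a naive submultiplicative bound $\|\hat{W}\Delta\hat{W}^\dagger\|_2 \leq \|\hat{W}\|_2 \|\Delta\|_2 \|\hat{W}^\dagger\|_2$ picks up the condition number $\sigma_1(\hat{W})/\sigma_r(\hat{W})$, which we have no a priori control over. To sidestep this I would exploit a finer structural property of the sketch: letting $V_B$ denote the right singular vectors of $B$, any unit vector $x \perp V_B$ satisfies $Bx = 0$, so $x^T\Delta x = \|\hat{W}x\|^2$, and the FD inequality forces $\|\hat{W}(I - V_BV_B^T)\|_2^2 \leq \|\hat{W}\|_F^2/\ell$. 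I would then split $I = V_BV_B^T + (I - V_BV_B^T)$ on either side of $\Delta$ in the expression $\hat{W}\Delta\hat{W}^\dagger$: the off-subspace terms get absorbed into $\|\hat{W}(I - V_BV_B^T)\|_2$ factors, while on the $\ell$-dimensional subspace $\mathrm{span}(V_B)$ the action of $\hat{W}^\dagger$ is well-conditioned, controlled only by the top-$\ell$ spectrum of $\hat{W}$ (which by Weyl is close to that of $B$). This yields an overall bound of order $\|\hat{W}\|_F^2/\ell$ without any stray condition-number factor.

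Finally, I would translate the $O(\|\hat{W}\|_F^2/\ell)$ bound into the target $\eps\|W\|_F^2$ using the uniformity of column norms already proven. Corollary~\ref{cor:first}, combined with the $\sqrt{n}/m$ rescaling built into $\hat{W}$, gives $\|\hat{W}\|_F^2 \leq \polylog(n)\cdot\|W\|_F^2/m$. With $\ell = \Theta(1/\eps)$ and the standing assumption $m = \Omega(\polylog(n))$ from the algorithm's setup, the chain of inequalities collapses to $\|\hat{W}\hat{W}^T - \tilde{W}\|_2 \leq \polylog(n)\,\|W\|_F^2/(m\ell) \leq \eps\|W\|_F^2$. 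Aside from the condition-number step, every ingredient is already in hand from the FD guarantee plus Corollary~\ref{cor:first}; the technically delicate part is justifying the subspace split carefully enough that no dependence on $\sigma_r(\hat{W})$ survives.
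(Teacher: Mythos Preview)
Your reduction $\hat{W}\hat{W}^T - \tilde{W} = \hat{W}\Delta\hat{W}^\dagger$ with $\Delta = \hat{W}^T\hat{W} - B^TB$, the FD bound $\|\Delta\|_2 \le \|\hat{W}\|_F^2/\ell$, and the final passage from $\|\hat{W}\|_F^2$ to $\|W\|_F^2$ via Corollary~\ref{cor:first} and the assumption $m \ge \polylog(n)$ all coincide with the paper's argument line for line. The only divergence is at the submultiplicative step. The paper does \emph{not} attempt anything like your subspace split; it simply writes
\[
\|\hat{W}\Delta\hat{W}^\dagger\|_2 \;\le\; \|\hat{W}\|_2\,\|\Delta\|_2\,\|\hat{W}^\dagger\|_2 \;=\; \|\Delta\|_2,
\]
i.e.\ it tacitly sets $\|\hat{W}\|_2\|\hat{W}^\dagger\|_2 = 1$. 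That product is the condition number $\kappa(\hat{W})$, and nothing in the setup forces it to be~$1$, so you are right that this step, taken at face value, is unjustified.

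Your proposed repair, however, does not close the gap either. The claim that on $\mathrm{span}(V_B)$ the operator $\hat{W}^\dagger$ is ``controlled only by the top-$\ell$ spectrum of $\hat{W}$'' is not warranted: $V_B$ consists of right singular vectors of $B$, not of $\hat{W}$, and the FD guarantee gives no alignment between the two subspaces. Concretely, with $\hat{W} = U\Sigma V^T$ one has $\hat{W}\Delta\hat{W}^\dagger = U\,\Sigma(V^T\Delta V)\Sigma^{-1}U^T$, and for symmetric $D = V^T\Delta V$ with off-diagonal mass the norm $\|\Sigma D\Sigma^{-1}\|_2$ can genuinely scale like $\kappa(\hat{W})\|D\|_2$. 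Your split $I = V_BV_B^T + (I - V_BV_B^T)$ does control $\|\hat{W}(I - V_BV_B^T)\|_2$, but every surviving cross term and the on-subspace term still carry a factor $\|V_BV_B^T\hat{W}^\dagger\|_2$ or $\|\hat{W}^\dagger\|_2$, and neither is bounded independently of $\sigma_r(\hat{W})$ in general. So you have correctly located the delicate point that the paper glosses over, but the subspace argument as sketched does not eliminate the condition-number dependence; making the lemma airtight would seem to require either an explicit bound on $\kappa(\hat{W})$ or a different choice of $\tilde{W}$.
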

\begin{proof}
We can bound $\|\tilde{W} - \hat{W}\hat{W}^T\|_2$ as following:
\begin{align*}
\|\tilde{W} - \hat{W}\hat{W}^T\|_2 &= \|\hat{W}B^TB\hat{W}^{\dagger} - \hat{W}\hat{W}^T \|_2  \\
&= \| \hat{W}(B^TB - \hat{W}^T\hat{W})\hat{W}^{\dagger} \|_2 \\
%&\leq \|\hat{W}(B^TB - \hat{W}^T\hat{W})\hat{W}^{\dagger}\|_2 \\
&\leq \|\hat{W}\|_2 \|B^TB - \hat{W}^T\hat{W}\|_2 \|\hat{W}^{\dagger}\|_2 \\
&=\|B^TB - \hat{W}^T\hat{W}\|_2 \\
&\leq \|\hat{W}\|_F^2/\ell
\end{align*}
And we can also bound $\|\hat{W}\|_F^2$:
\begin{align*}
\|\hat W\|_F^2 &= \sum_{i=1}^m \|\hat{W}_{:,i}\|^2 = \sum_{i=1}^m \frac{n}{m^2} \|W_{:,i^*}\|^2\\
&\leq \sum_{i=1}^m \frac{n}{m^2} \polylog(n) \frac{\|W\|_F^2}{n} \\
&= \sum_{i=1}^m \polylog(n) \frac{\|W\|_F^2}{m^2} = \frac{\polylog(n)}{m} \|W\|_F^2
\end{align*}
Putting the two bounds together we get:
\[
\|\tilde{W} - \hat{W}\hat{W}^T\|_2 \leq \frac{\polylog(n)}{m \ell} \|W\|_F^2
\]
Since we already showed $m > \polylog(n)$ in lemma \ref{lem:cs_bernstein}, setting $\ell = \frac{1}{\eps}$ suffices to complete the proof.
%
%\begin{align*}
%\|\tilde{W} - \hat{W}\hat{W}^T\|_2 &= x^T(\hat{W}B^TB\hat{W}^{\dagger} - \hat{W}\hat{W}^T)x  \\
%&= x^T (\hat{W}(B^TB - \hat{W}^T\hat{W})\hat{W}^{\dagger}) x \\
%&\leq \|\hat{W}(B^TB - \hat{W}^T\hat{W})\hat{W}^{\dagger}\|_2 \\
%&\leq \|\hat{W}\|_2 \|B^TB - \hat{W}^T\hat{W}\|_2 \|\hat{W}^{\dagger}\|_2 \\
%&=\|B^TB - \hat{W}^T\hat{W}\|_2 \\
%&\leq \|\hat{W}\|_F^2/\ell = \|W\|_F^2/\ell
%\end{align*}
%
%Third transition is true because of submultiplicativity of spectral norm, the fourth transition holds due to $\|\hat{W}^{\dagger}\|_2=\|\hat{W}\|_2^{-1}$, and finally the last inequality is achieved by the bound on \FD \cite{Lib12} and the fact $\|\hat{W}\|_F = \|W\|_F$.% 
%Setting $\ell = 1/\eps$, gives $x^T (\tilde W - \hat{W}\hat{W}^T)x \leq \eps \|\hat W\|_F^2$ and since this holds for all $x$, setting  $x = x^* = \argmax x^T (\tilde W - \hat{W}\hat{W}^T)x$ completes the proof.
\end{proof}

%Putting the two lemma together we get following Theorem.
\begin{theorem}\label{thm:eq}
Let $W \in \R^{n \times n}$ be similarity matrix of $S \in \R^{n \times d}$, and $\tilde W = \hat W B^TB \hat{W}^{\dagger}$ be the weight matrix constructed by $\hat W \in \R^{n \times m}$ and $B \in \R^{\ell \times m}$, where $\hat W$ is the set of columns sampled with replacement from $W$, and $B$ is the output of algorithm \ref{alg:ssh}. Then for $m = \Omega \left(  \frac{1}{\eps} \polylog(n) \log(1/\delta) \right) $ and $\ell = O \left(\frac{1}{\eps} \right)$:
\[
\|W^2 - \tilde W\|_2 \leq \eps \|W\|_F^2
\]
holds with probability $1-\delta$.
\end{theorem}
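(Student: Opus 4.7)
The plan is to combine the two preceding lemmas via the triangle inequality, treating $\hat W \hat W^T$ as a bridge between the true squared affinity matrix $W^2$ and the sketched approximation $\tilde W$. Specifically, I would write
\[
\|W^2 - \tilde W\|_2 \leq \|W^2 - \hat W \hat W^T\|_2 + \|\hat W \hat W^T - \tilde W\|_2,
\]
and then bound the two terms on the right individually.

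For the first term, I would invoke Lemma \ref{lem:cs_bernstein} with error parameter $\eps/2$ and failure probability $\delta$, which forces $m = \Omega(\frac{1}{\eps}\polylog(n)\log(1/\delta))$ and yields $\|W^2 - \hat W \hat W^T\|_2 \leq (\eps/2)\|W\|_F^2$ with probability at least $1-\delta$. For the second term, I would apply Lemma \ref{lem:FD} with error parameter $\eps/2$, which requires $\ell = O(1/\eps)$ and is deterministic given the execution of the \FD sketch, delivering $\|\hat W \hat W^T - \tilde W\|_2 \leq (\eps/2)\|W\|_F^2$. Summing the two bounds gives exactly the desired $\eps\|W\|_F^2$ guarantee.

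The probability accounting is straightforward: only the Bernstein-based bound in Lemma \ref{lem:cs_bernstein} is randomized (the randomness coming from the uniform sampling of $S_{\text{train}}$), while Lemma \ref{lem:FD} holds deterministically once $\hat W$ is fixed and $B$ is produced by the Frequent Directions update. Hence no union bound is needed and the overall failure probability remains $\delta$. The constants hidden in the $\Omega(\cdot)$ and $O(\cdot)$ notation absorb the factor-of-two rescaling of $\eps$.

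There is no real obstacle here beyond bookkeeping; the substantive work has already been done in the two lemmas. The only mild subtlety is confirming that the $\hat W^{\dagger}$ factor in the definition of $\tilde W$ does not inflate norms in an unexpected way, but this was handled cleanly in the proof of Lemma \ref{lem:FD} via $\|\hat W\|_2 \|\hat W^{\dagger}\|_2 = 1$ on the range of $\hat W$. Thus the theorem follows as a direct corollary of the two lemmas and the triangle inequality.
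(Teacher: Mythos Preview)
Your proposal is correct and matches the paper's own proof essentially verbatim: the paper also invokes Lemmas~\ref{lem:cs_bernstein} and~\ref{lem:FD}, applies the triangle inequality through $\hat W\hat W^T$, and rescales $\eps$ to $\eps/2$. Your additional remarks on the probability accounting and the role of $\hat W^{\dagger}$ are accurate elaborations of points the paper leaves implicit.
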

\begin{proof}
Having results of Lemmas \ref{lem:cs_bernstein} and \ref{lem:FD}, assuming $m$ to be sufficiently large to meet conditions of both Lemmas and using triangle inequality and rescaling $\eps$ to $\eps/2$ proves the result.
\end{proof}

We explain some informal intuition of what Theorem \ref{thm:eq} implies. First we could infer $\|W - \sqrt{\tilde W} \|_2$ 
is small.  Now writing the SVD decomposition of 
$B^T B$ as $U S U^T$, we get $\tilde W = \hat{W} U S U^T \hat{W}^{\dagger}  = \hat{W} U S U^T \hat{W}^{\dagger}$. The intuition then is that if $\hat{W}^{\dagger}$ is similar to $\hat{W}^T$, then $\sqrt{\tilde W} \approx \hat W U S^{1/2}$, which is just the projection of $\hat W$ onto the right singular space of $B$. 
This suggests that the right singular space of $B$ captures most of the spectrum of $W$, in the sense that a column sample of $W$ projected on the right singular space of $B$ and scaled appropriately recovers $W$ closely. 

\section{Experiments}
\label{sec:exp}
Herein we describe an extensive set of experiments on a wide variety of large input data sets.
We ran all algorithms under a common implementation framework using Matlab to have a fair basis for comparision.
%Experiments are ran on a 64-bit Windows 7 machine with Intel(R) Core(TM) i7 3.40 GHz CPU and 32GB of RAM.

We compared efficiency and accuracy of our algorithm (SSBC) versus well-known streaming binary encoding techniques, including ``Multidimensional Spectral Hashing''(MDSH)\cite{weiss2012multidimensional},``Locality Sensitive Hashing'' (LSH)\cite{datar2004locality} and ``Spectral Hashing'' (SH)\cite{weiss2009spectral}.

We also compare accuracy of these algorithms against exact solution for binary coding problem when posed as a matrix optimization. As the exact solution, we compute the affinity matrix of whole dataset $S_{total} = [S_{train} ; S_{test}]$, and take the eigen decomposition of that. Let $W_{total} \in \R^{n\times n}$ denote the affinity matrix for $S_{total}$. If $W_{total} = U \Lambda U^T$ is the eigen decomposition of $W_{total}$, then $i$-row of $sign(U_k)$ matrix provides a binary code of length $k$ to $i$-th datapoint in $S_{total}$. In our experiments, we considered two types of thresholding on exact solution, namely deterministic rounding and randomized rounding. The deterministic rounding version is called ``Exact-D'' in the plots, and it basically takes the sign of $U_k$ only. The randomized rounding one is called ``Exact-R'' in the plots, and what it does is that after computing $U_k$ it multiplies it by a random rotation matrix $R \in \R^{n \times n}$, and then takes the sign of entries. 

\paragraph{Datasets}
We compare performance of our algorithm on both synthetic and real datasets.
Each data set is divided into two subsets, $S_{train}$ and $S_{test}$, with same number of dimensions and different number of datapoints. 
Table \ref{tbl:datasets} lists all datasets along with some statistics about them.
We refer to each set as an $n \times d$ matrix $A$, with $n$ datapoints and $d$ dimensions. Training Set is taken small in size so that it easily fits into memory, while $S_{test}$ is a large stream of data whose datapoints are processed one-by-one by our algorithm.    

As synthetic dataset we used multidimensional uniform distribution with $d = 50$ dimensions in which $t$-th dimension $\forall t,\; 1 \leq t \leq d$ has a uniform distribution in range $[0,(1/t)^2]$. In spectral hashing algorithm\cite{weiss2009spectral}, authors argue their learned eigenfunctions converge most sharply for rectangle distribution and include experimental results on uniform distributions demonstrating this efficacy. We added such dataset here so as to evaluate SSBC for a dataset model well suited to their algorithm.

\begin{table}[t!!!!]
\begin{center}
\begin{tabular}{|c||c|c|c|c|c|}
\hline
\textbf{DataSet} & \textbf{\# Train} & \textbf{\# Test} & \textbf{Dimension} & \textbf{Rank} \\
\hline
\hline
\textsf{PAMAP} & 100 & 21000 & 44 & 44 \\
\hline
\textsf{CBM}& 200 & 11000 & 18 & 16  \\  
\hline   
\textsf{Uniform} & 500 & 10000 & 50 & 50 \\
\hline 
\textsf{Covtype} & 500 & 20000 & 54 & 53 \\  
\hline 
\end{tabular} 
\end{center}
\vspace{-2mm}
\caption{\label{tbl:datasets} Datasets Statistics.}
\end{table}

We used three real-world datasets in our experiments. In each dataset, we uniformly sampled a small subset of data at random and considered it as $S_{train}$, and used a subset of remaining part as $S_{test}$. Information about size of training set and test set is provided in table \ref{tbl:datasets}.
First real-world dataset was the famous \s{Covtype}\cite{Covtype} that contains information about predicting forest cover type from cartographic variables.
Second one was \s{CBM} or ``Condition Based Maintenance of Naval Propulsion Plants''\cite{Coraddu2013Machine} which is a dataset generated from simulator of a gas turbine propulsion plant. It contains $11934$ datapoints in $d = 16$ dimensional space.

The \s{PAMAP}\cite{reiss2012introducing} dataset is a Physical Activity Monitoring dataset that contains data of $18$ different physical activities (such as walking, cycling, playing soccer, etc.), performed by $9$ subjects wearing $3$ inertial measurement units and a heart rate monitor. The dataset contains $54$ columns including a timestamp, an activity label (the ground truth) and $52$ attributes of raw sensory data. In our experiments, we removed columns containing missing values and used a subset with $d = 44$ columns.
%\textsf{Year} or in full name ``Year Prediction MSD'' a subset from the Million Songs Dataset\cite{} that contains prediction of the release year of songs from their audio features. 

%We define $\sigma_{30}$ to be the average distance over a dataset of each point to its $30$th nearest neighbor, and set this threshold in both Hamming and Euclidean space to designate whether two points are similar for our ``precision" and ``recall" metrics. The only exception to this choice of threshold is for Mean Average Precision(MAP), where we consider $3$ different similarity levels comprising $\sigma_{30}$, the average all pairs distance, and $0.25 \sigma_{30}$. $\sigma_{30}$. In all cases, we set the choice of the $\sigma$ parameter in our Gaussian weight kernel equal to our similarity threshold for classifying points as near.  The number of bits we use ranges from $k = 20$ to $k=50$ with increments of $5$.
\paragraph{Metrics}
We use three following metrics to compare accuracy of discussed algorithms:
\begin{itemize}
\item \textit{Precision}: The number of true similar datapoints returned by an algorithm over total number of datapoints returned by the algorithm.
\item \textit{Recall}: The number of true similar datapoints returned by an algorithm over correct number of similar datapoints.
\item \textit{Mean Average Precision (MAP)}: The mean of the average precision scores for each test point.
\end{itemize}

We have used the Guassian function $w(p,q) = \exp(-\|p-q\|^2/\sigma)$ to compute affinity between any two datapoints $p$ and $q$. We set $\sigma$ in each dataset to the average distance of all train datapoints to their $30$-th nearest neighbour, and set this threshold in both Hamming and Euclidean space to designate whether two points are similar. We refer to this parameter as $\sigma_{30}$. We have used $\sigma_{30}$ in all the experiments involving ``precision" and ``recall" metrics. 
For Mean Average Precision(MAP) metric, we consider $3$ different similarity levels comprising $\sigma_{30}$, the average of all pairs distance in training set ($\sigma_{all}$) , and $\sigma_{30} / 4$. In all cases, we set the choice of the $\sigma$ parameter in our Gaussian weight kernel equal to our similarity threshold for classifying points as near. The number of bits we use ranges from $k = 20$ to $k=50$ with increments of $5$.

%In particular, SSBC performs exceptionally well on precision over all datasets, providing very few ``false positives" compared to the other algorithms and consistently providing the highest precision of the methods evaluated. On the recall metric also SSBC provides the best results over all the approaches evaluated. In both cases, this edge in performance is maintained over all tested ranges of length of codeword $k$. Interestingly, SSBC secures higher precision and recall than the exact methods exact-D and exact-R which solve the matrix optimization by applying an SVD over enitre dataset. This is likely because maintaining a column sample of the weight matrix through a training set $S_{train}$ helps prevent overfitting errors.

%%%%%%%%%%%%%%%%%%%%%%%%%%%%%%%%%%%%%%%%%%%%%%%%%
\begin{figure}[t!]
\begin{centering}
\includegraphics[width=\figsize]{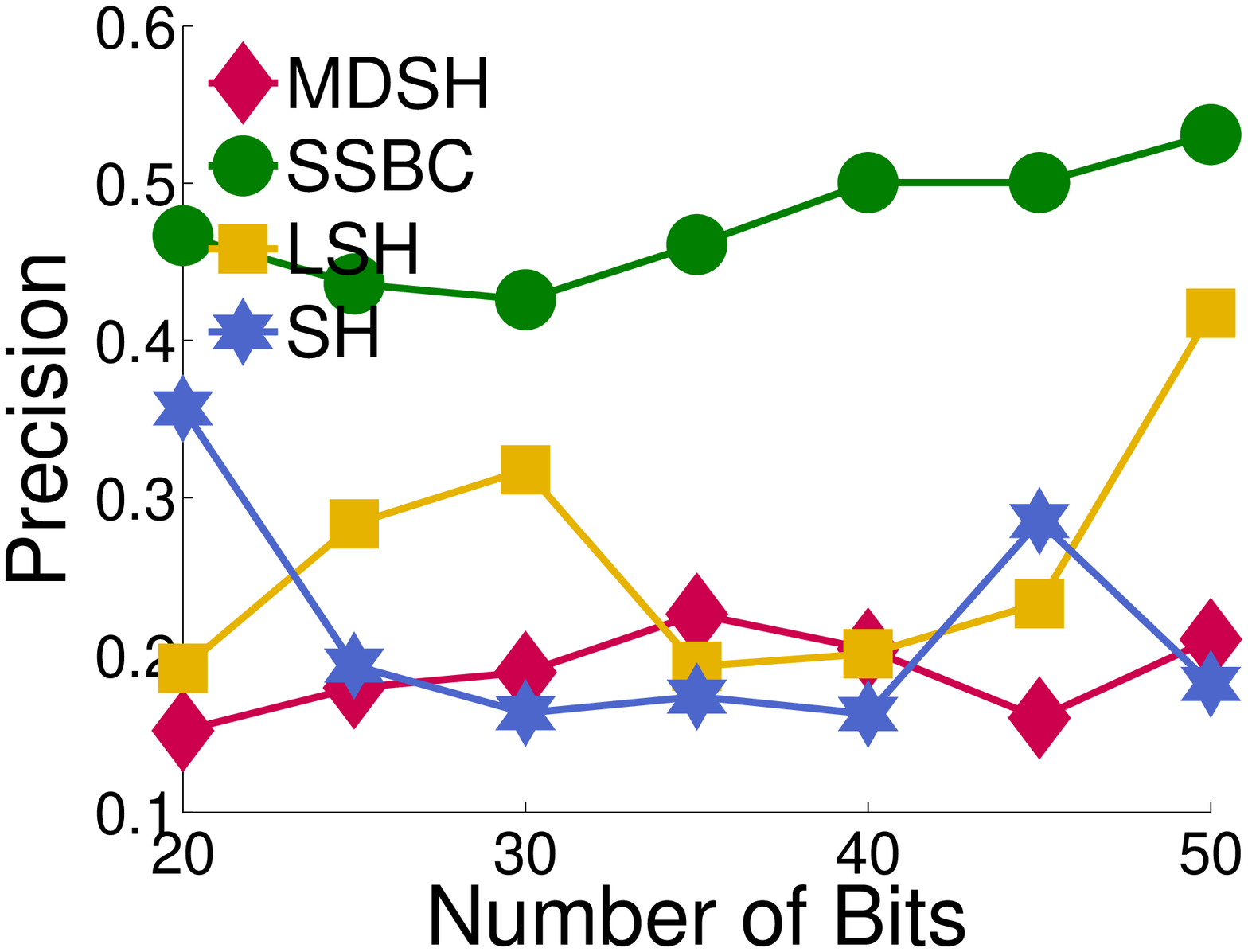}
\includegraphics[width=\figsize]{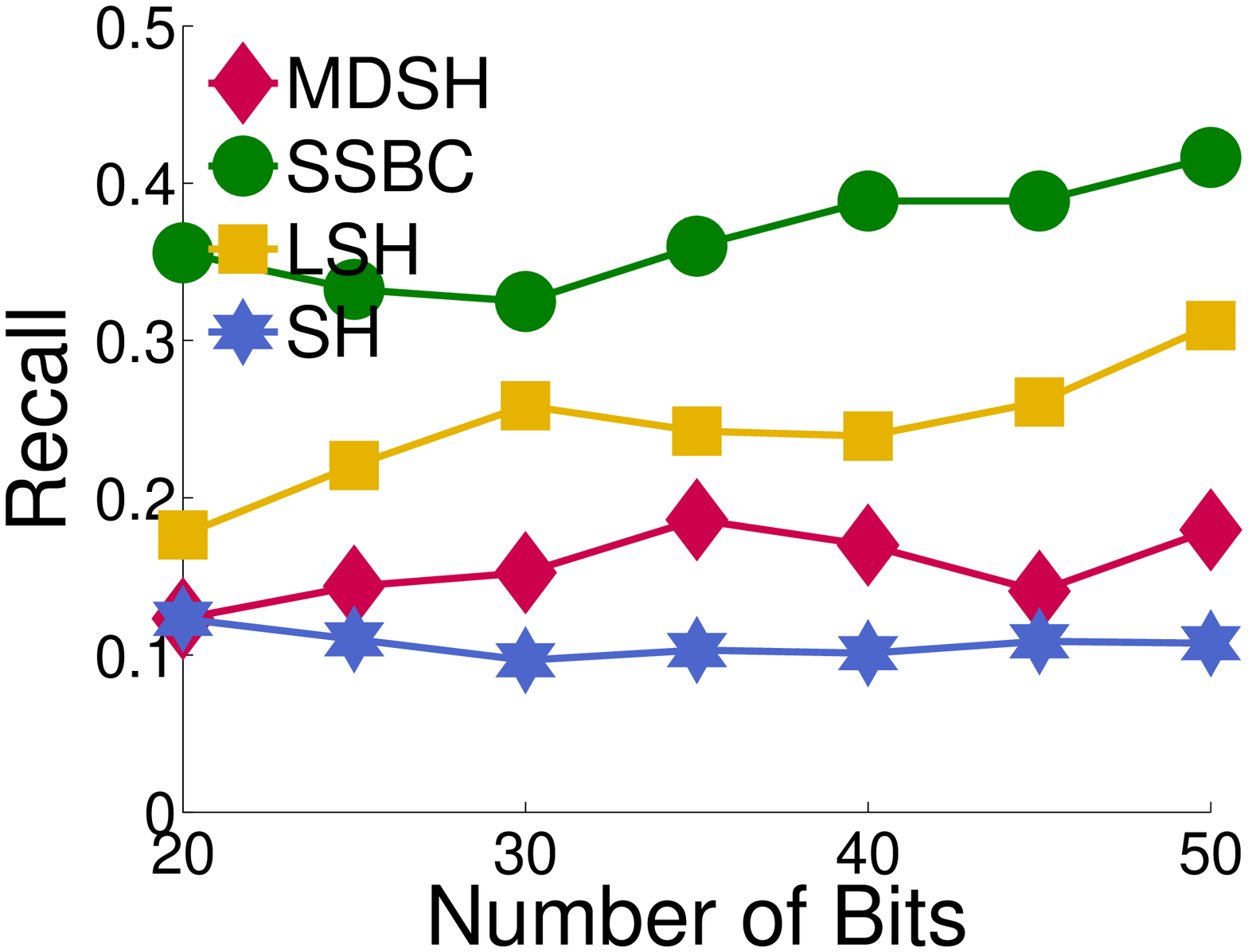}
\includegraphics[width=\figsize]{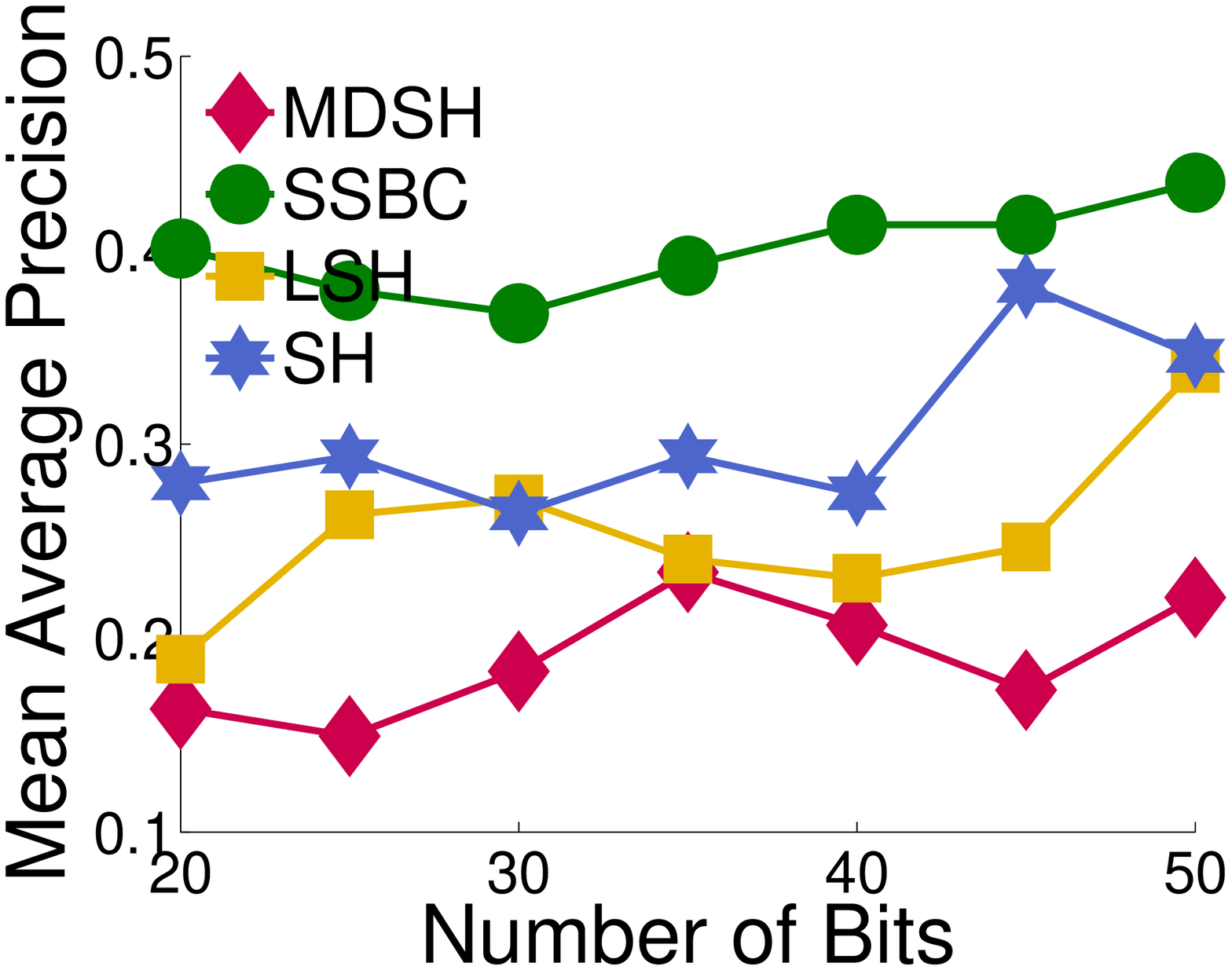}
\includegraphics[width=\figsize]{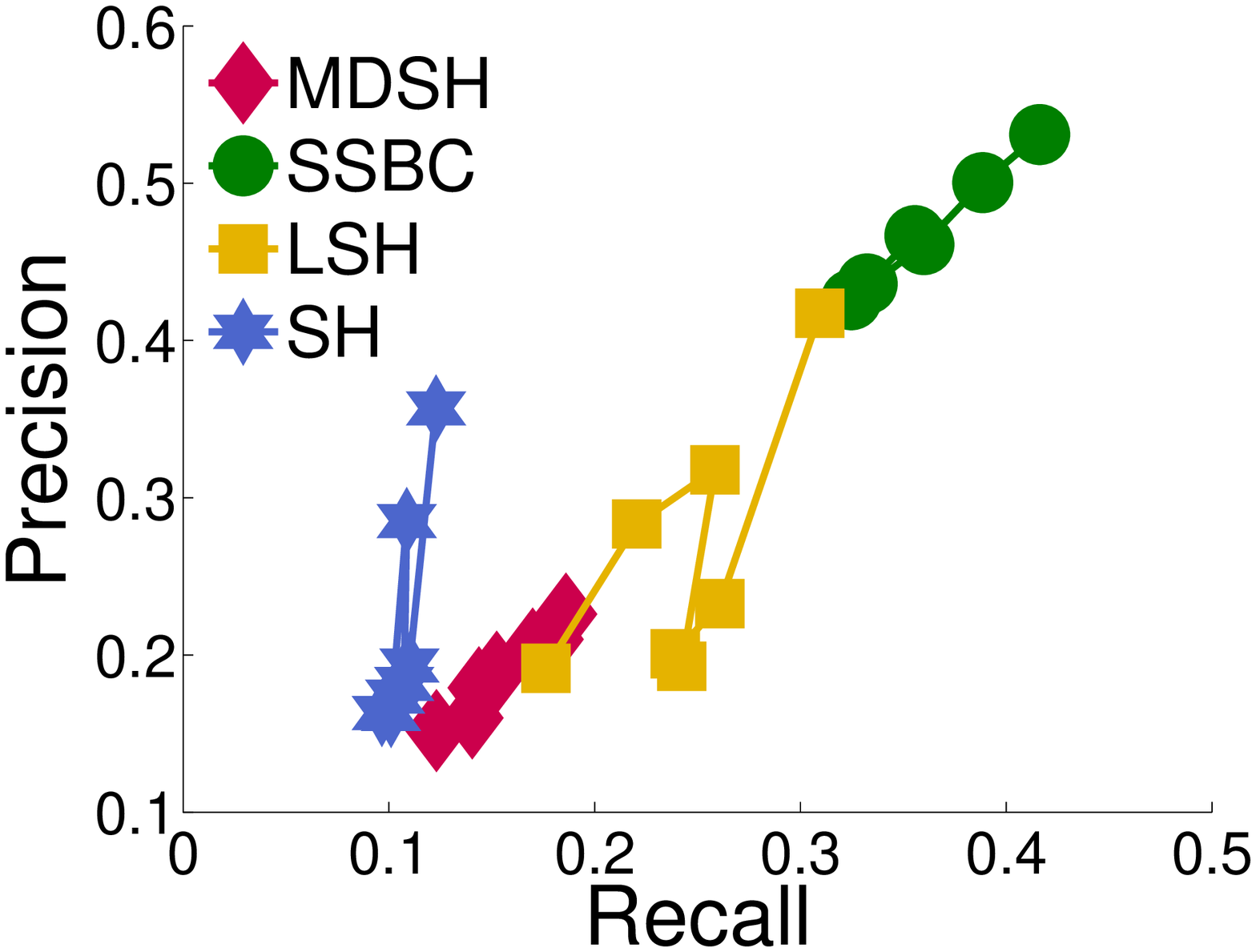}
\vspace{-3mm}
\caption{\label{fig:pamap}
Results on \s{PAMAP} dataset.}  
\vspace{-2mm}
\end{centering}
\end{figure}

\begin{figure}[t!]
\begin{centering}
\includegraphics[width=\figsize]{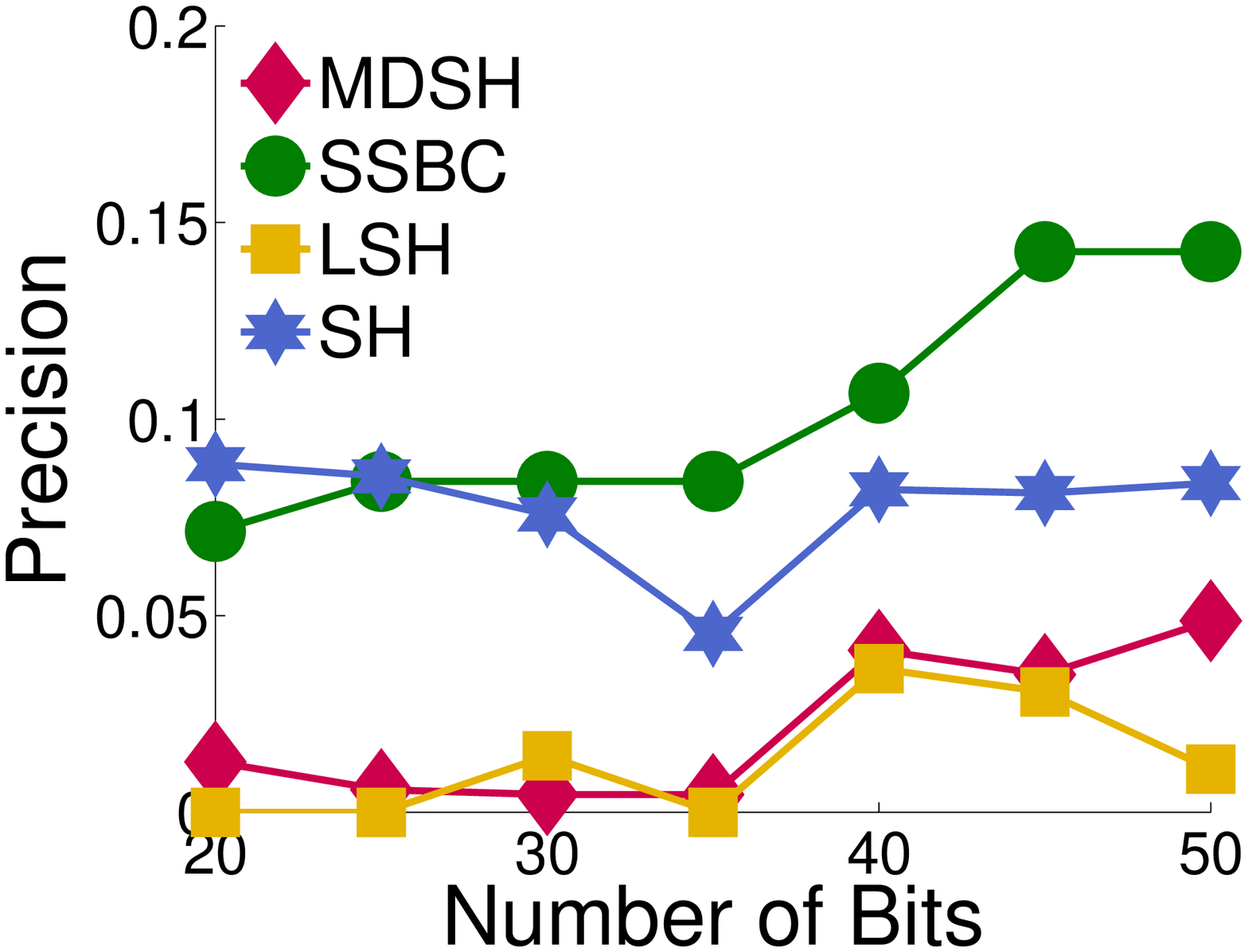}
\includegraphics[width=\figsize]{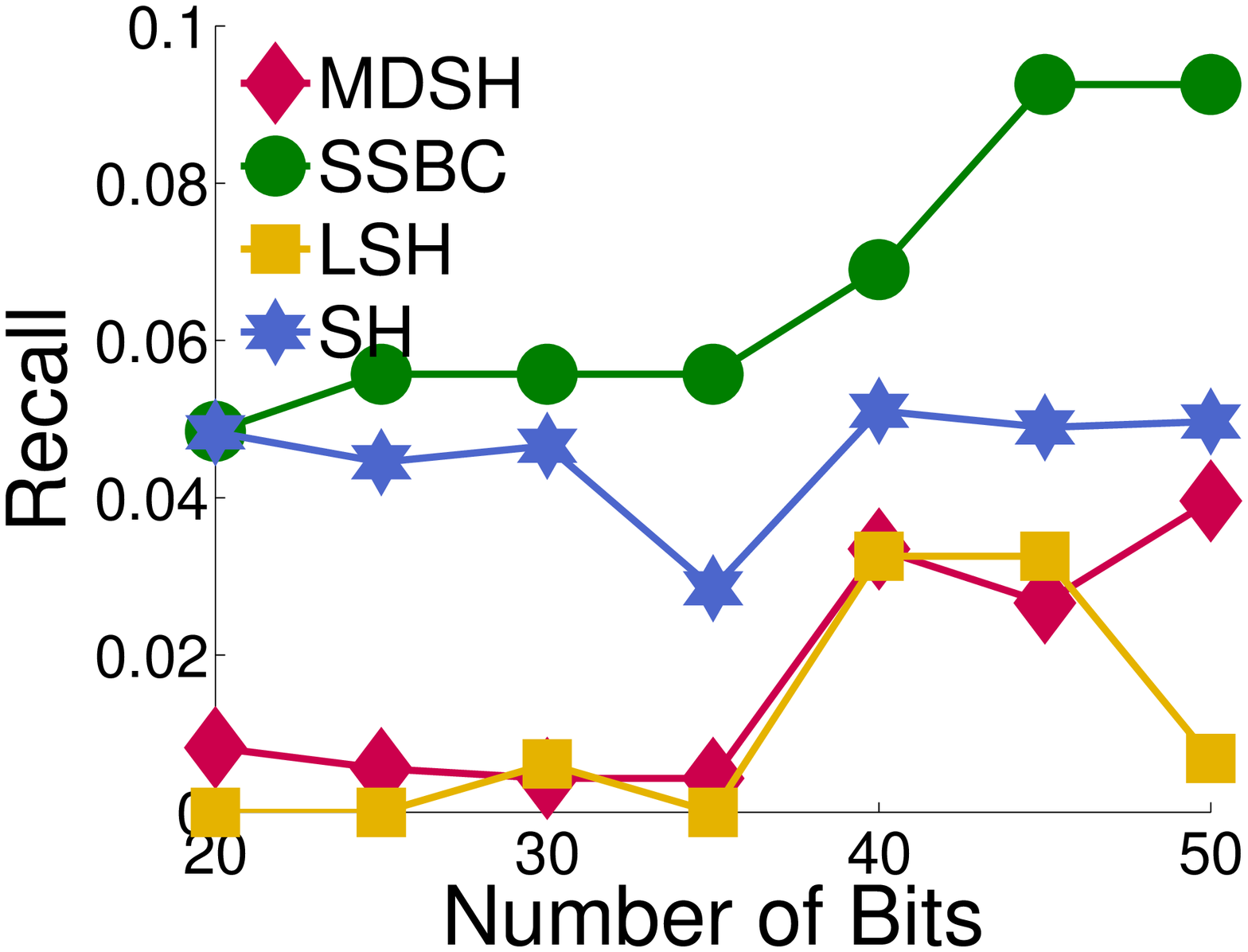}
\includegraphics[width=\figsize]{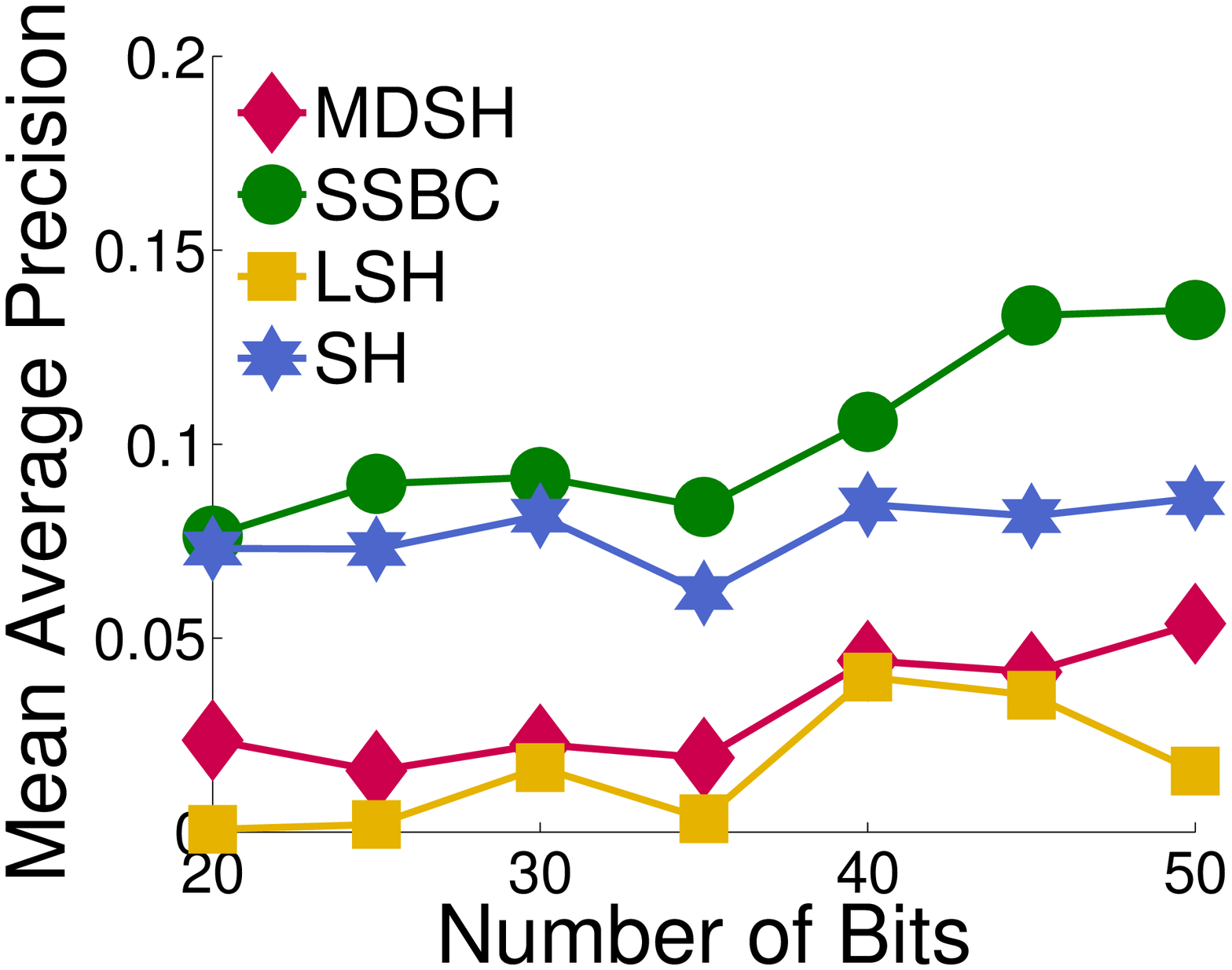}
\includegraphics[width=\figsize]{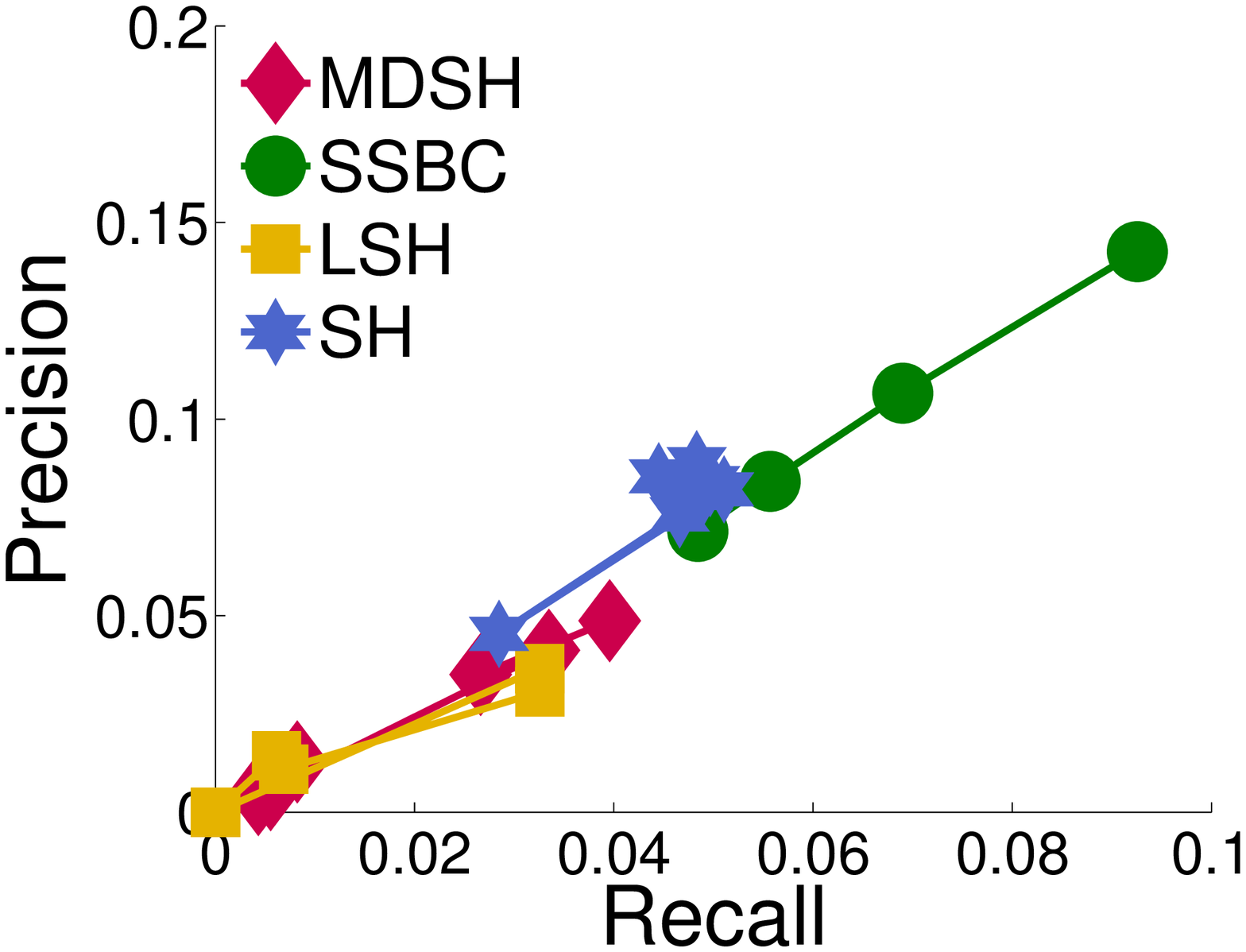}
\vspace{-3mm}
\caption{\label{fig:cbm}
Results on \s{CBM} dataset.}  
\vspace{-2mm}
\end{centering}
\end{figure}

\begin{figure}[t!]
\begin{centering}
\includegraphics[width=\figsize]{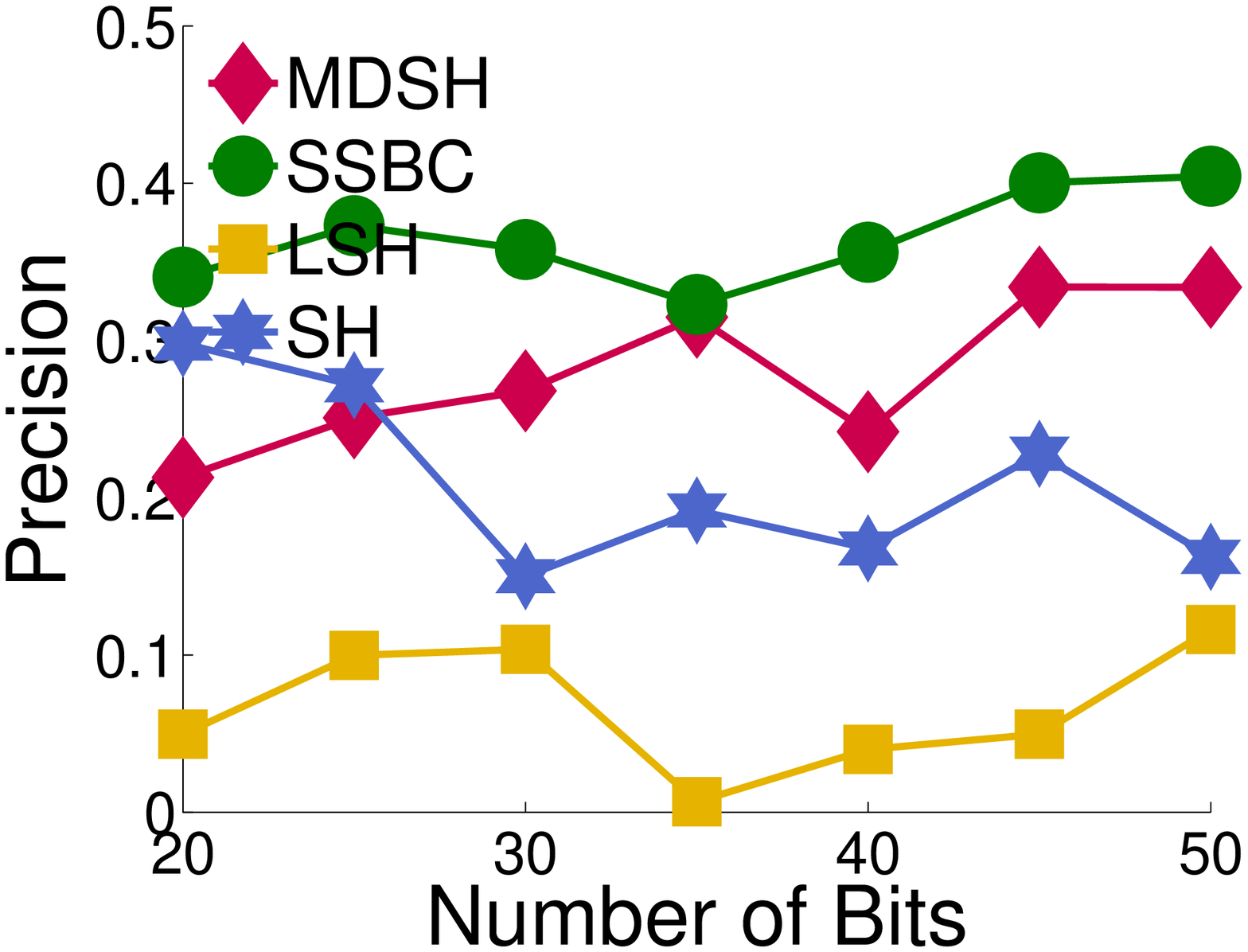}
\includegraphics[width=\figsize]{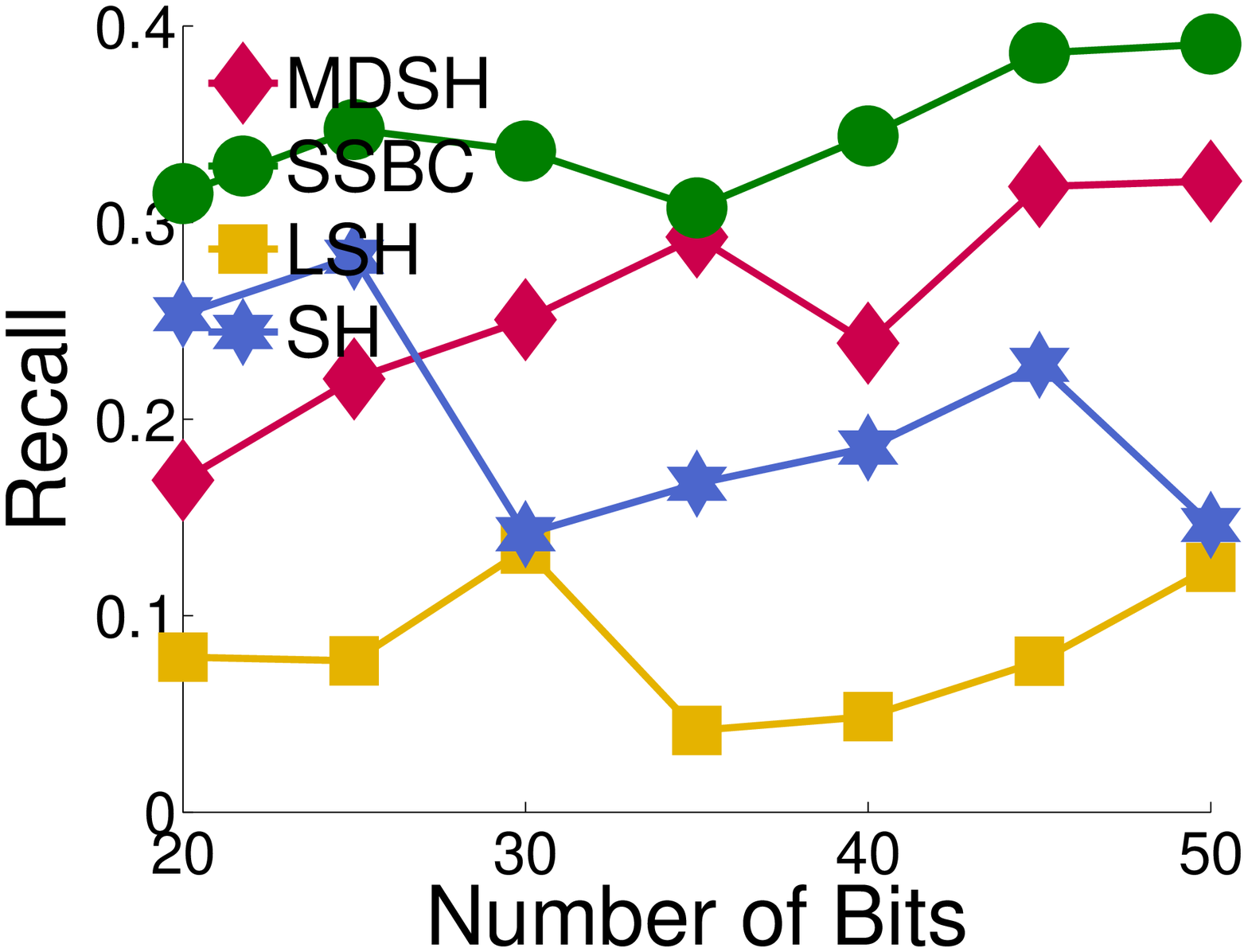}
\includegraphics[width=\figsize]{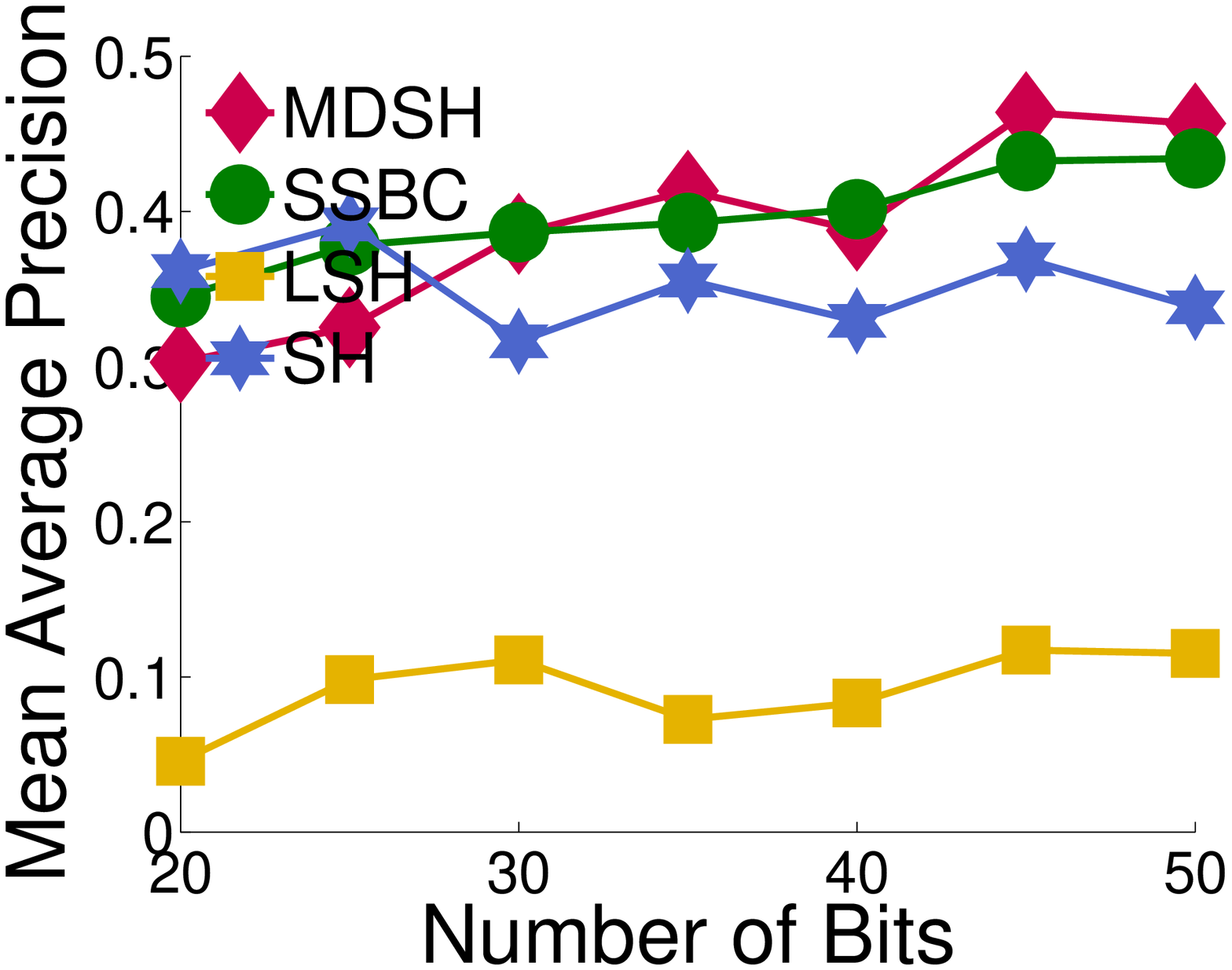}
\includegraphics[width=\figsize]{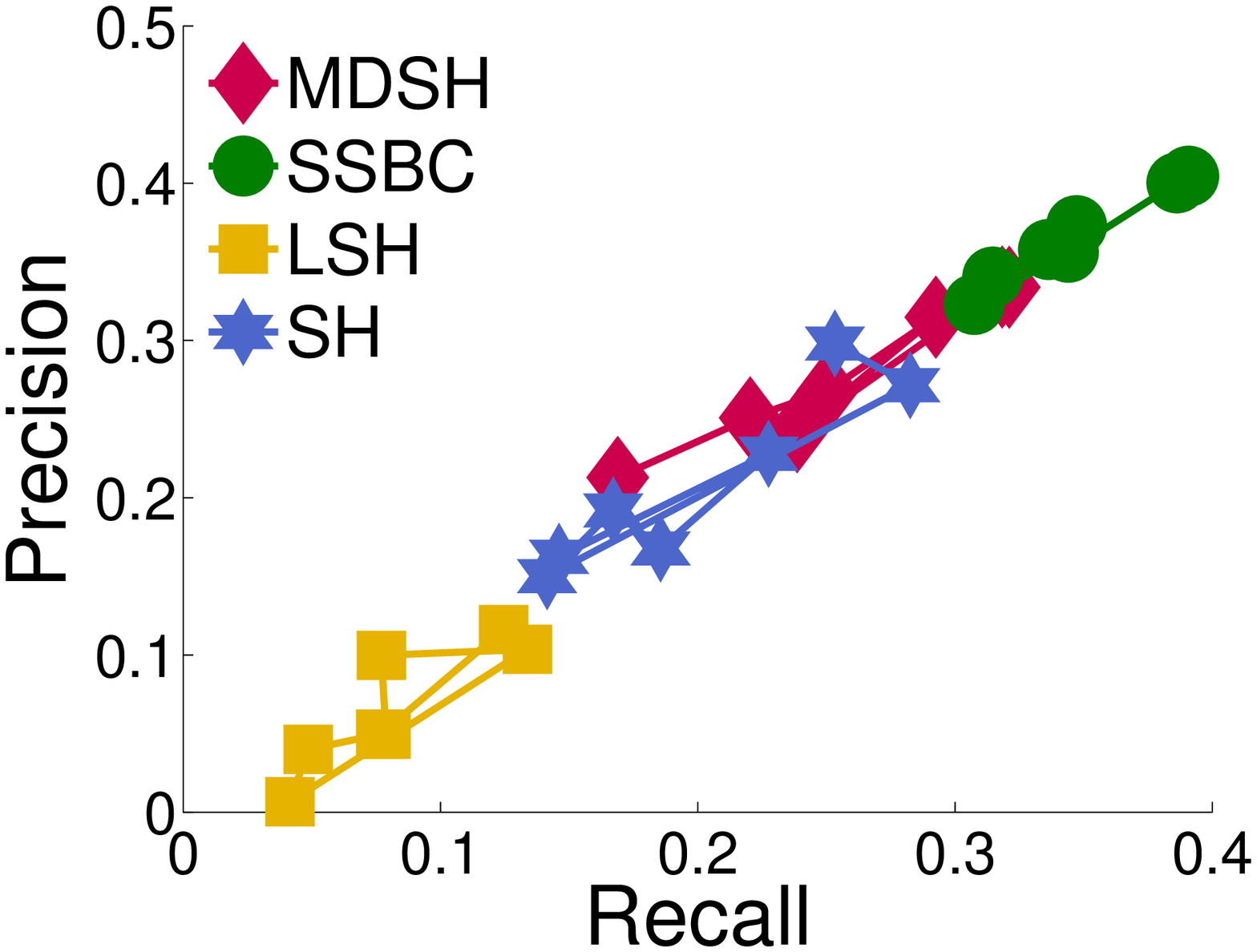}
\vspace{-3mm}
\caption{\label{fig:uniform}
Results on \s{Uniform} dataset.}  
\vspace{-2mm}
\end{centering}
\end{figure}

\begin{figure}[t!]
\begin{centering}
\includegraphics[width=\figsize]{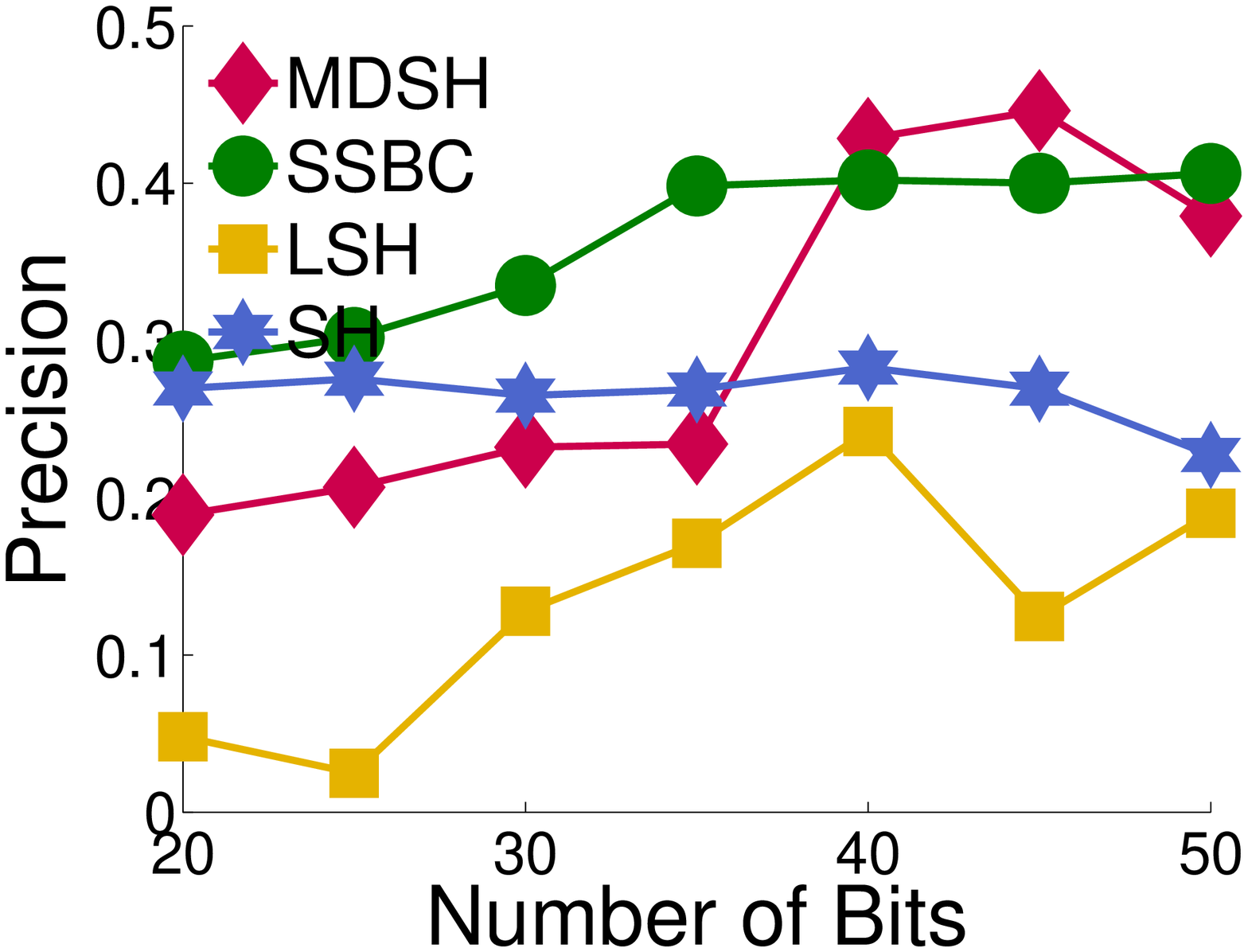}
\includegraphics[width=\figsize]{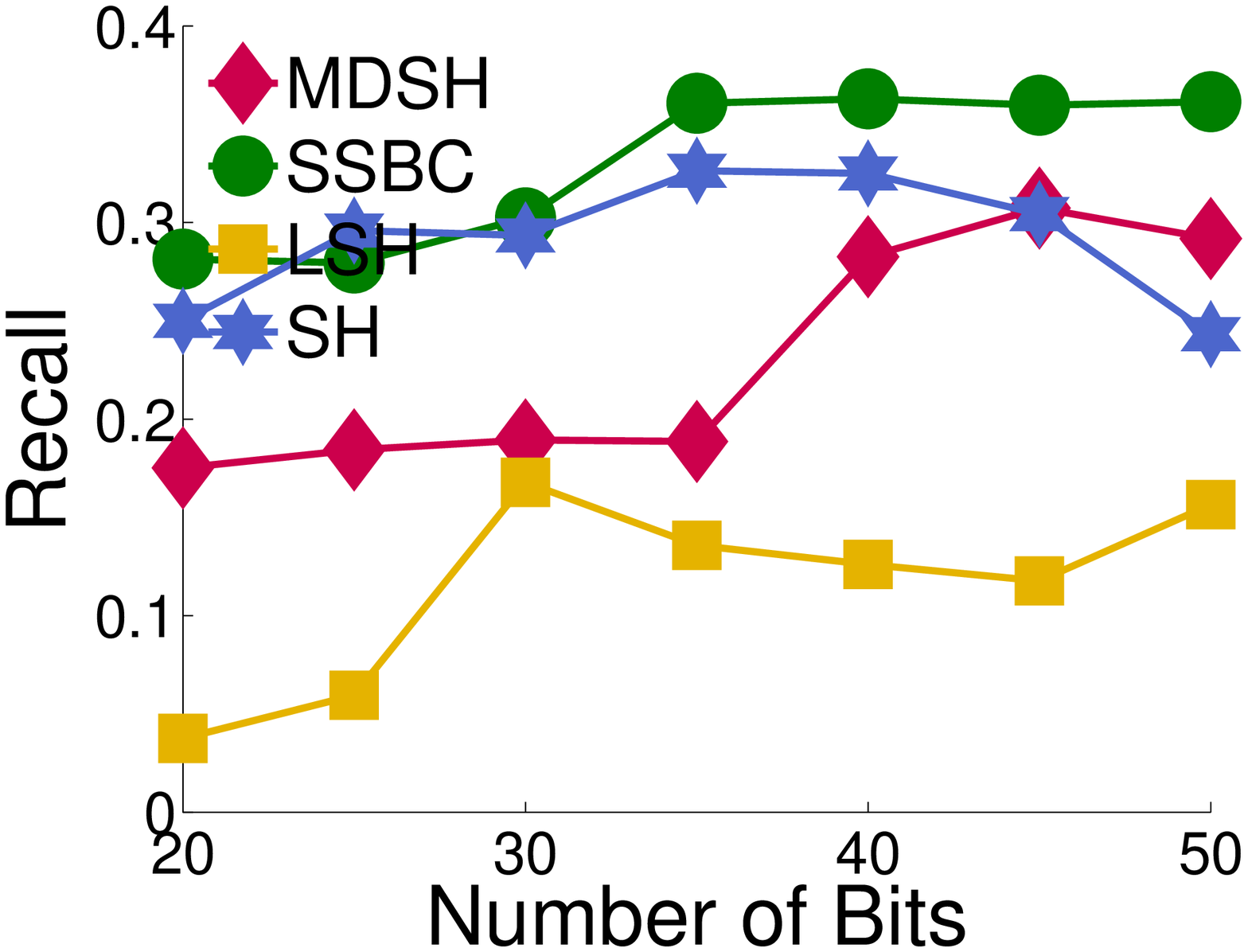}
\includegraphics[width=\figsize]{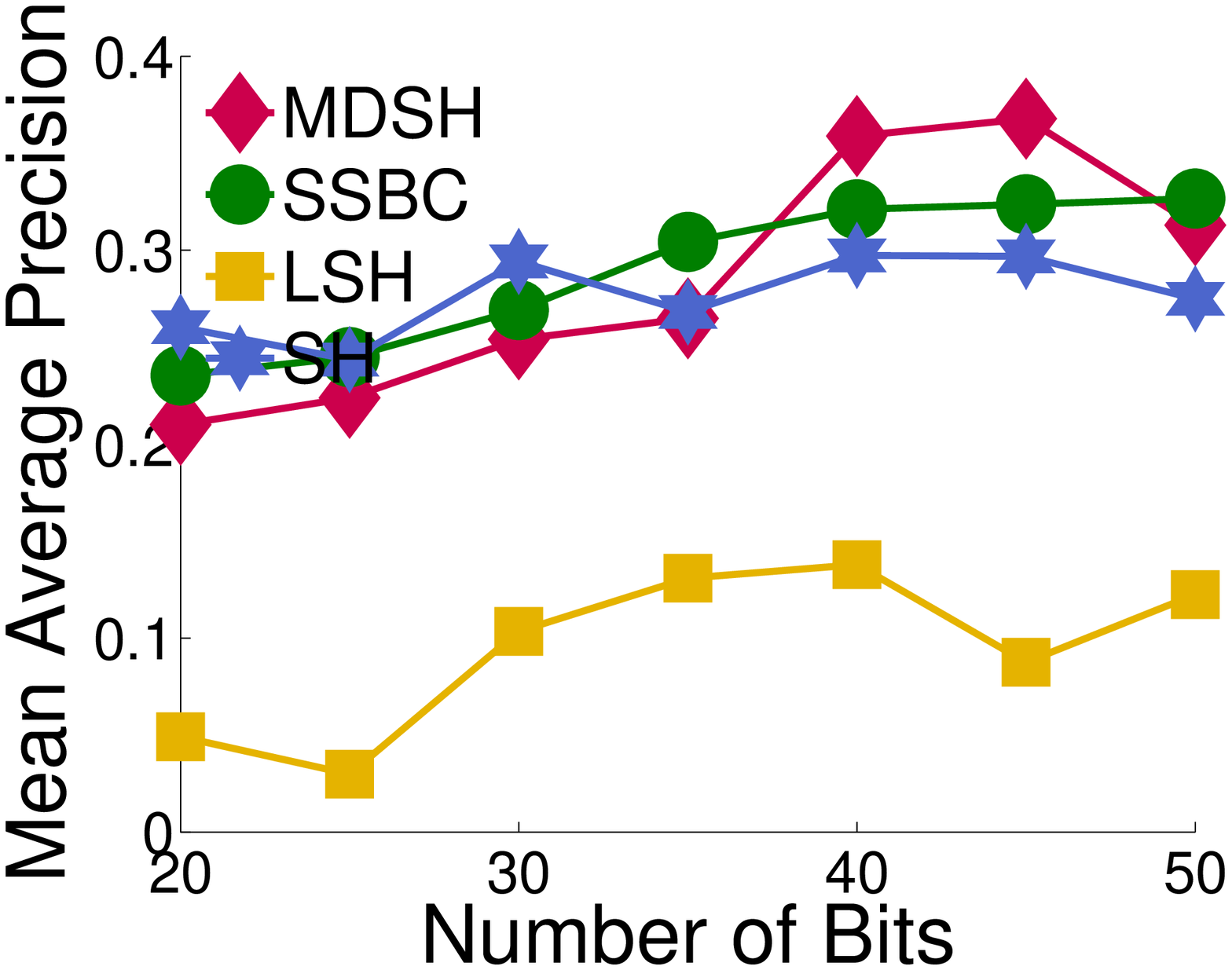}
\includegraphics[width=\figsize]{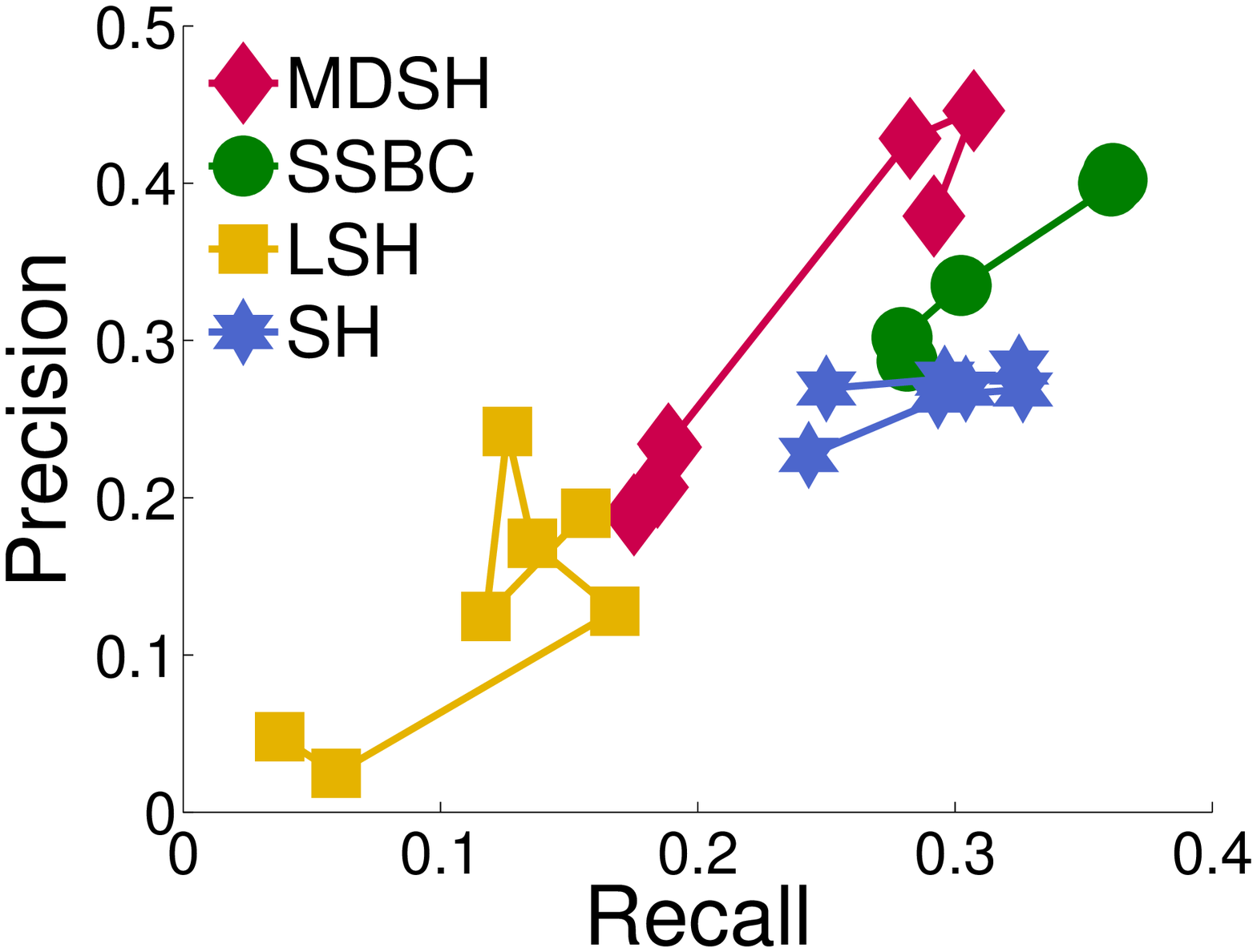}
vspace{-3mm}
\caption{\label{fig:covtype}
Results on \s{Covtype} dataset.}  
\vspace{-2mm}
\end{centering}
\end{figure}

\begin{figure}[t!]
\begin{centering}
\includegraphics[width=\figsize]{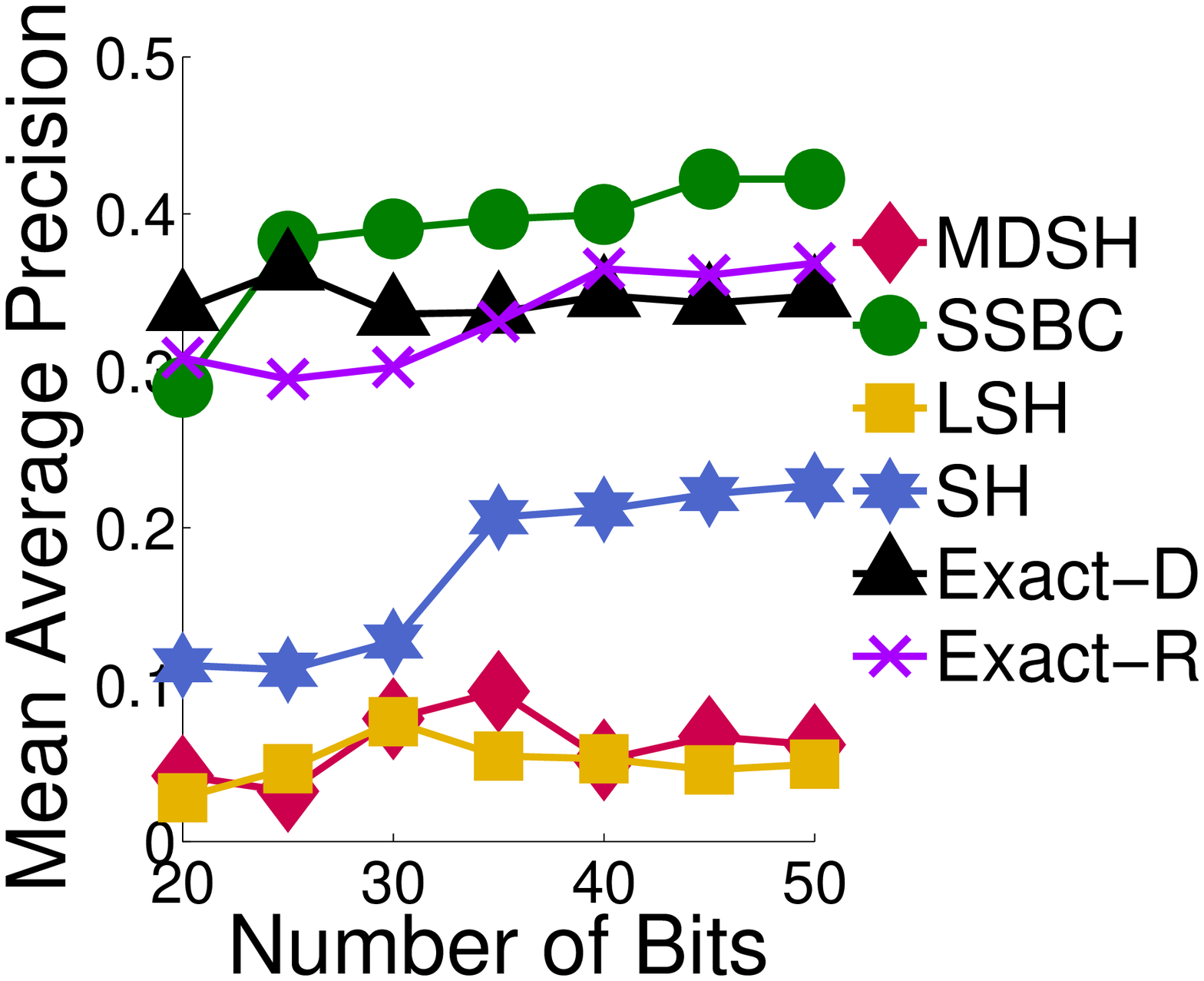}
\includegraphics[width=\figsize]{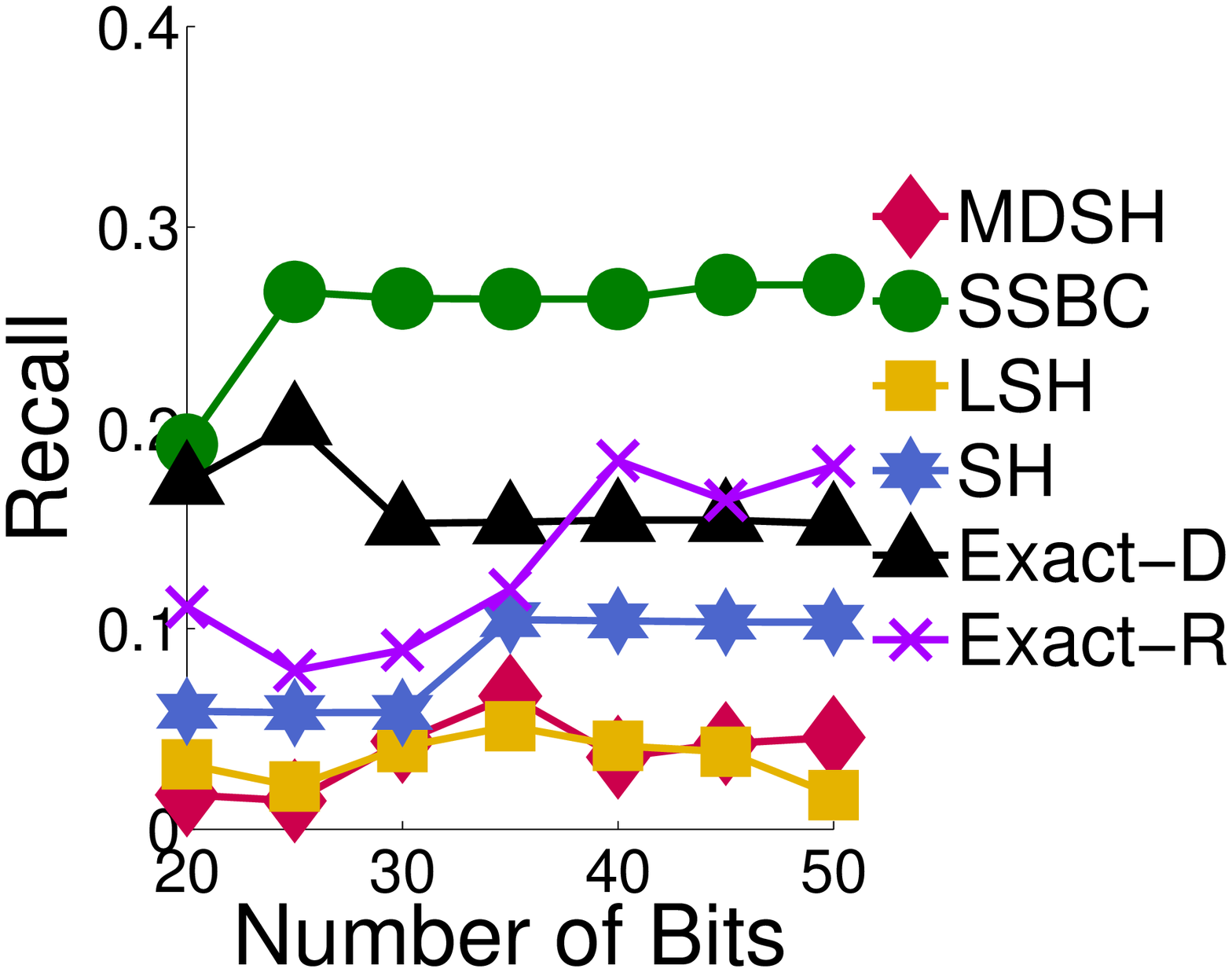}
\includegraphics[width=\figsize]{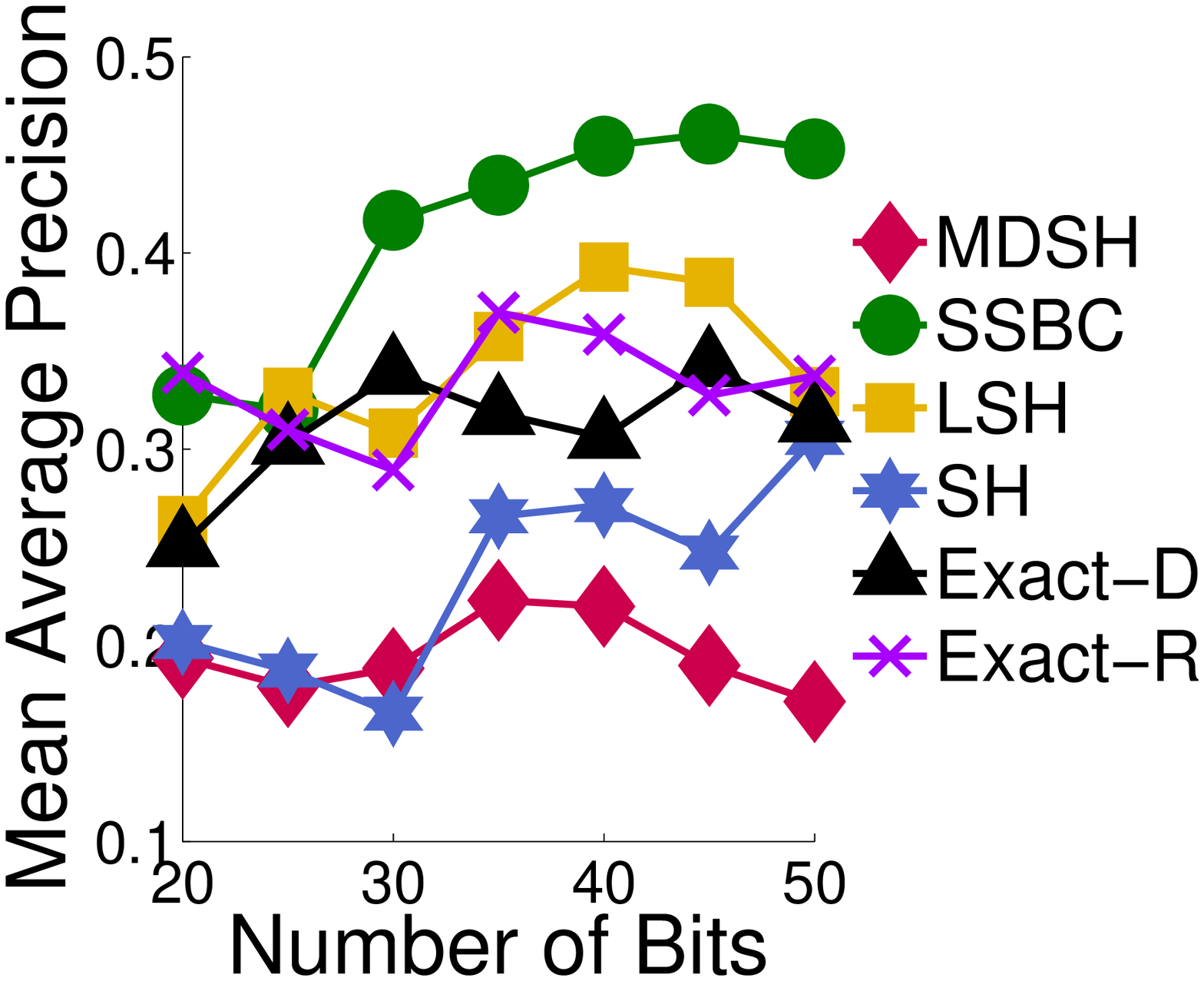}
\includegraphics[width=\figsize]{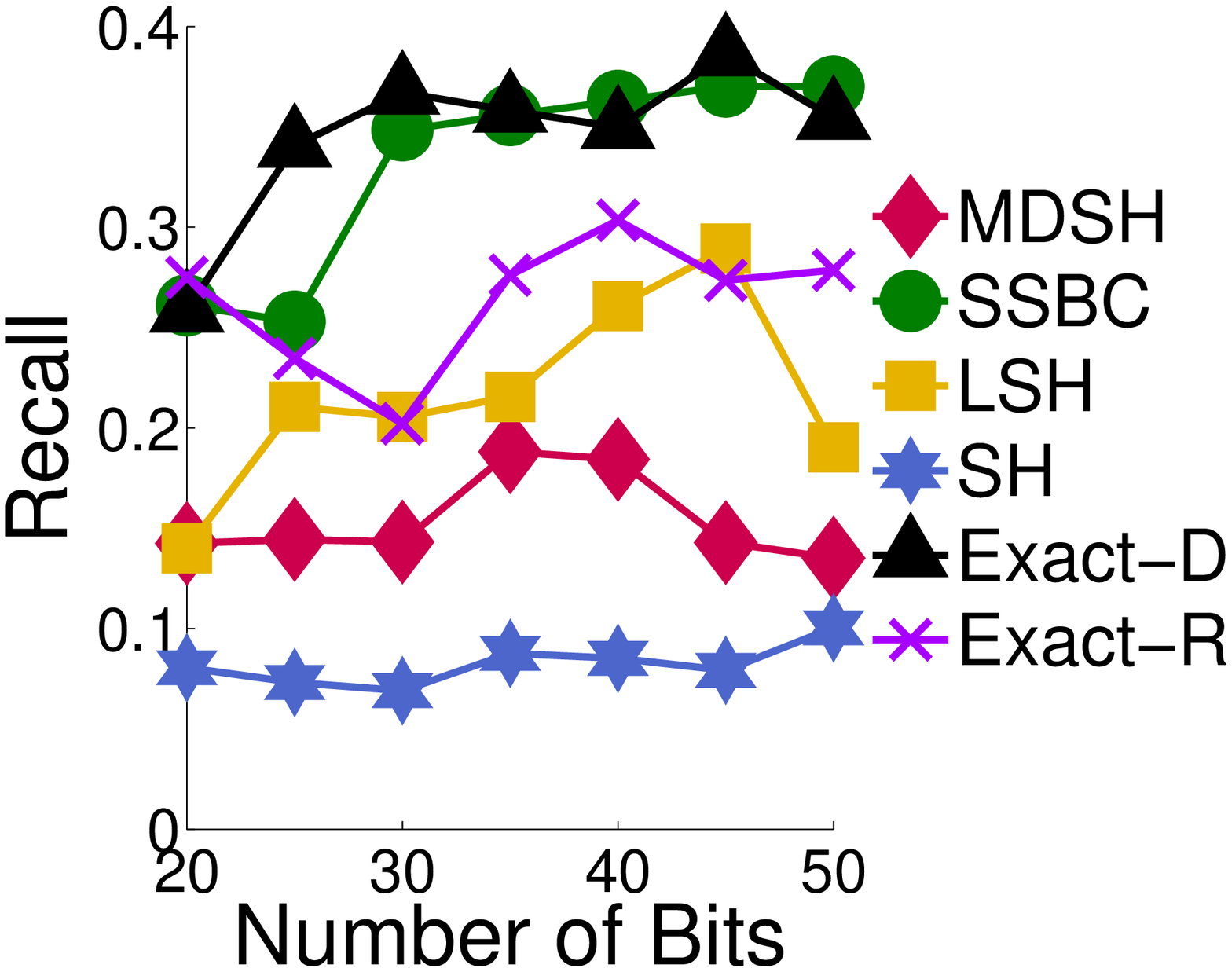}
\vspace{-3mm}
\caption{\label{fig:exact_cbm_pamap}
Comparing algorithms with Exact methods on \s{CBM} (first row) and \s{PAMAP} (second row) datasets. Training set for each dataset was of size $200$ and $100$ and test set was of size $1000$ and $3000$, respectively.}  
\vspace{-2mm}
\end{centering}
\end{figure}

As we observe in precision and recall plots of figures \ref{fig:uniform},\ref{fig:covtype},\ref{fig:cbm} and \ref{fig:pamap}, SSBC performs exceptionally well on precision, providing very few ``false positives" compared to the other algorithms and consistently providing the highest precision of the methods evaluated. 
On recall metric also SSBC provides the best results over all the approaches evaluated. In both cases, this edge in performance is maintained over all tested ranges of length $k \in [20,50]$ of codewords. Combining these two plots we get precision-recall comparison (last plot in all above mentioned figures) which shows that SSBC forms an almost 45-degree line in all figures, i.e. basically its mistake rate does not increase by returning more candidates for nearest neighbours (having high recall).

% We note the comparison here is not strictly fair to SSBC, as it is specifically designed for and tested in the streaming setting, and is never allowed to store the entire dataset in memory as the other techniques are. Despite this, our algorithm demonstrates performance markedly better to the competing offline approaches.\mina{other techniques we are comparing with, do not store whole data too!}
In a separate set of experiments, we compared accuracy of all algorithms with exact methods. This time in order to allow exact algorithms to load the whole $n$ by $n$ weight matrix in RAM, we used a much smaller test set. Size of test set and training set for these experiments are mentioned in caption of plot \ref{fig:exact_cbm_pamap}.
As we see in this plot, SSBC secures higher mean average precision and recall than the exact methods, ``exact-D'' and ``exact-R'' which solve the matrix optimization by applying an SVD over enitre dataset. This is likely because maintaining a column sample of the weight matrix through a training set helps prevent overfitting errors.

\bibliographystyle{abbrv}
\bibliography{hashing}  

\end{document}